\newtheorem{theorem}{Theorem}[section]
\newtheorem{lem}[theorem]{Lemma}
\theoremstyle{remark}
\newtheorem*{rem}{Remark}
\DeclareMathOperator{\RE}{Re}
\DeclareMathOperator{\IM}{Im}
\DeclareMathOperator{\cN}{\mathcal{N}}
\DeclareMathOperator{\rN}{\mathrm{N}}
\DeclareMathOperator{\PP}{\mathbb{P}}
\DeclareMathOperator{\vare}{\varepsilon}
\DeclareMathOperator{\Arg}{Arg}
\DeclareMathOperator{\UU}{U}
\DeclareMathOperator{\Sp}{Sp}
\DeclareMathOperator{\SO}{SO}
\newcommand*\der{\mathop{}\!\mathrm{d}}
\tiny\color{gray},
\title{\Large On the interim statistics for compact group \\characteristic polynomials and their derivatives}
\author[1]{\normalsize Emma Bailey\thanks{Email: \texttt{e.c.bailey@bristol.ac.uk}}}
\author[2]{\normalsize Sebastian Ortiz\thanks{Email: \texttt{sortiz3@ccny.cuny.edu}}}
\affil[1]{\footnotesize  \it Department of Mathematics, University of Bristol, Bristol, UK}
\affil[2]{\footnotesize  \it Department of Mathematics, City College of New York, CUNY, New York, NY}
\date{\today}
\date{}
\begin{document}
\maketitle

\begin{abstract}
  The Keating-Snaith central limit theorem proves that $\Lambda_N(A)=\log\det(I-A)$, for randomly drawn $A\in \UU(N)$, suitably normalised, tends to a complex Gaussian random variable in the large $N$ limit.  The deviations of the real and imaginary parts of $\Lambda_N(A)$, on the scale of a positive $k$th multiple of the variance, are known to be Gaussian but with a multiplicative perturbation in the form of the $2k$th moment coefficient. Here we study the interpolating regime by allowing $k=k(N)$ for both $\RE(\Lambda_N(A))$ and $\IM(\Lambda_N(A))$.  Additionally our methods apply to the logarithm of the derivative of the characteristic polynomial evaluated at an eigenvalue of $A$.
\end{abstract} 

\section{Introduction}
\subsection{Background}
Let $P_N(U,\theta)=\det(I-Ue^{i\theta})$ be the characteristic polynomial for a unitary matrix $U\in \UU(N)$ evaluated at $e^{i\theta}$.  If one draws $A$ from $\UU(N)$ with respect to the Haar measure, then the central limit theorem of Keating and Snaith~\cite{keasna00a} shows the following convergence in distribution for $\theta\in\mathbb{R}$,
\begin{equation}\label{eq:ks_clt}
  \frac{\log P_N(A,\theta)}{\sqrt{\log N}}\xrightarrow[N\rightarrow\infty]{d}\cN_\mathbb{C}(0,1).
\end{equation}
As usual, we say $\mathcal{Z}\sim\cN_{\mathbb{C}}(0,1)$ if the real and imaginary parts of $\mathcal{Z}$ are independently distributed as $\cN(0,1/2)$. The convention we take in this paper is the branch of $\log P_N(A,\theta)$ with locally $-\pi/2<\IM\log(1-e^{i(\theta-\theta_j)})\leq \pi/2$, for $\theta_j$ an eigenangle of $A$. The result is independent of $\theta\in\mathbb{R}$ due to the rotational invariance of the Haar measure.

Equivalently, again for fixed $\theta, x\in\mathbb{R}$, writing $\PP\equiv\PP_{Haar}$ for the usual Haar measure on $U(N)$,
\begin{equation}\label{eq:ks_clt_reim} 
  \begin{rcases*}
    \PP\left(\frac{\log|\det(I-Ae^{i\theta})|}{\sqrt{(1/2)\log N}}\leq x\right)\\
    \PP\left(\frac{\Arg\det(I-Ae^{i\theta})}{\sqrt{(1/2)\log N}}\leq x\right)
  \end{rcases*}
  \overset{N\rightarrow\infty}{\longrightarrow}  \int_{-\infty}^x e^{-u^2/2}\frac{\der u}{\sqrt{2\pi}}.
\end{equation}
See also Figure~\ref{fig:ks_clt}. Here and in the following, unless otherwise stated, we assume that $\theta\in\mathbb{R}$.

A natural extension of \eqref{eq:ks_clt} is to consider fluctuations from the Gaussian limit. Informally, a random variable $X_N$ satisfies a \emph{large deviation principle} with speed $a_N$ and rate function ${I:\mathbb{R}\rightarrow\mathbb{R}_{\geq 0}}$ if $\mathbb{P}(X_N>y)$ decays exponentially as $\exp(-a_N I(y))$ for large $N$. 
Observe that since $A\in\UU(N)$ we have $\log|\det(I-Ae^{i\theta})|\leq N\log 2$ for $\theta\in\mathbb{R}$, but the real part is unbounded below.  The imaginary part has a symmetric linear bound in $N$: $|\Arg\det(I-Ae^{i\theta})|\leq \pi N/2$.  Thus, the maximal scaling for the right tail of the real or imaginary parts of the random variable $X_N=\log P_N(A,\theta)/b_N$ is $b_N$ on the order of $N$.

\begin{figure}[hb]
\centering
    \begin{subfigure}[b]{0.48\textwidth}
    \centering
    \includegraphics[width=\textwidth]{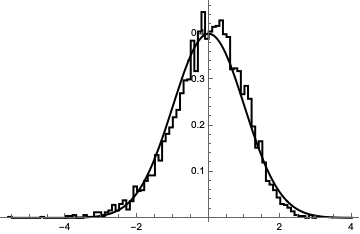}
    \caption{}\label{fig:ks_clt}
    \end{subfigure}
\hfill
    \begin{subfigure}[b]{0.48\textwidth}
    \centering
    \includegraphics[width=\textwidth]{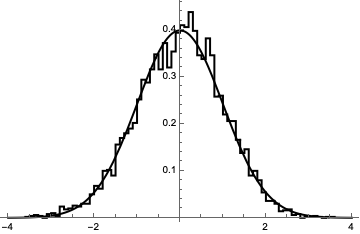}
    \caption{}\label{fig:deriv_clt}
    \end{subfigure}
    \caption{Histograms of (a) $\log|P_N(A,0)|$ and (b) $\log|P_N'(A,\theta_1)|$, both scaled, for $N=75$, with $5,000$ data points, against the standard Normal distribution (solid line).}
\end{figure}

Deviation principles for both the real and imaginary parts of $\log P_N(A,\theta)$ at all scalings $b_N=\mathcal{O}(N)$ were obtained by Hughes, Keating, and O'Connell~\cite{hugkeaoco01}.  In particular, they showed that for $b_N$ growing slower than the critical scaling $N$, for $\IM\log P_N$ the rate function $I$ is always quadratic.  The real part instead displays an asymmetric quality, whereby the right tail has a quadratic rate function up to the critical $\asymp N$ scaling, but the left features a transition from a quadratic to a linear function.  Explicitly, and most pertinently in the context of this work, they show that for fixed, positive $k$ and $\sqrt{\log N}\ll b_N\ll \log N$
\begin{equation}\label{eq:hko_ldp}
  \lim_{N\rightarrow\infty}\frac{1}{a_N}\log\PP\big(\log|P_N(A,\theta)|/b_N\geq k\big)=-k^2,
\end{equation}
where the speed $a_N$ is given as a scaled Lambert's $W$-function (cf.~\cite{hugkeaoco01} Theorem 3.5). For $b_N=\log N$, the asymptotic growth in $N$ is $a_N\sim \log N$. The same statement holds with the imaginary part of the logarithm replacing the real part in \eqref{eq:hko_ldp}. 

Precise large deviations at the particular scale $b_N\asymp \log N$ were obtained by F\'eray, M\'eliot and Nikeghbali~\cite{fermelnik18}.  For example, they showed that for fixed $k>0$
\begin{align}
  \PP\big(\log|P_N(A,\theta)|>k\log N\big)&=c_k\int_{k\sqrt{2\log N}}^\infty e^{-u^2/2}\frac{\der u}{\sqrt{2\pi}}\left(1+o(1)\right)\label{eq:clt_ld}\\
  &=c_k\frac{e^{-k^2\log N}}{k\sqrt{\pi \log N}}\left(1+o(1)\right)\label{eq:clt_ld_asym}
\end{align}
where $c_k$ is the coefficient of the $2k$th moment of $|P_N(A,\theta)|$, which was explicitly calculated in~\cite{keasna00a} to be
\begin{equation}\label{eq:mom_coeff}
  c_k = \lim_{N\rightarrow\infty}\frac{\mathbb{E}\left[|P_N(A,\theta)|^{2k}\right]}{N^{k^2}}=\frac{\mathcal{G}^2(k+1)}{\mathcal{G}(2k+1)}
\end{equation}
where $\mathcal{G}(z)$ is the Barnes $G$-function. 
Therefore~\eqref{eq:clt_ld_asym} captures not only the exponential part of the Gaussian decay from \eqref{eq:hko_ldp}, but the full Gaussian right tail together with a multiplicative perturbation in the form of $c_k$.

Under the, now well-established, conjectural dictionary relating statistical properties of characteristic polynomials to those of the Riemann zeta function, the same questions can be asked on the number theoretic side. The Riemann zeta function is defined as
\begin{equation}\label{rzf}
  \zeta(s)=\sum_{n\geq1}\frac{1}{n^s}=\prod_{p}\Big(1-\frac{1}{p^s}\Big)^{-1}
\end{equation}
for $\RE(s)>1$ and by analytic continuation otherwise. In \eqref{rzf}, the product is taken over primes $p$. It is clear from \eqref{rzf} that $\zeta(s)\neq 0$ for $\RE(s)>1$. The analytic continuation reveals a \emph{functional equation}, cf.~\cite{titchmarsh86}, from which it is clear both that $\zeta(-2n)=0$ for $n\in\{1,2,3,\dots\}$ (the `trivial zeros') and that otherwise if $\zeta(s)=0$ then $0\leq \RE(s)\leq 1$ (these are the `non-trivial' zeros).  The \emph{Riemann hypothesis} states that all the non-trivial zeros have real part equal to $1/2$.

Although the product form in \eqref{rzf} does not hold for $\RE(s)\leq 1$, it transpires that in some sense it does `typically' hold for $\RE(s)=1/2$ (indeed for $\RE(s)\geq 1/2$, cf.~\cite{bohjes32}). Truncating the product in~\eqref{rzf} at some large prime $\mathcal{P}$, we would have
\begin{align}
  \log\zeta(1/2+i t) &\approx \log\prod_{p\leq \mathcal{P}}\left(1-\frac{p^{-it}}{p^{1/2}}\right)^{-1}\nonumber\\
  &=\sum_{p\leq \mathcal{P}} \frac{p^{it}}{p^{1/2}} + \frac{1}{2}\sum_{p\leq \mathcal{P}} \frac{p^{2it}}{p}+\mathcal{O}(1)\label{eq:zeta_model}
\end{align}
after Taylor expanding the logarithm.  As $p^{it}$ is a value on the unit circle for each $p$, it is reasonable to think one could model this contribution by a sequence of random variables taking values uniformly on the unit circle, $(\chi_p)_{p\text{ prime}}$.  Then, the main contribution in the above decomposition is a sum of independent random variables, which implies that $\log\zeta(1/2+it)$, for typical $t$, has a Gaussian structure.  Indeed, this is the statement of Selberg's central limit theorem~\cite{sel46}: for $\tau$ drawn uniformly from $[T,2T]$,
\begin{equation}\label{eq:selbergclt}
  \frac{\log \zeta(1/2 + i\tau)}{\sqrt{\log\log T}}\xrightarrow[T\rightarrow\infty]{d}\cN_\mathbb{C}(0,1).
\end{equation}
Notice the similarity to \eqref{eq:ks_clt} once the standard identification $\log T\leftrightarrow N$ is made.  Unlike the proof of~\eqref{eq:ks_clt}, the Selberg central limit theorem does not use method of moments (although see~\cite{radsou17} where by working just off the critical line $s=1/2+it$ for $\RE\log\zeta(1/2+i\tau)$ one can proceed using moments).  It is conjectured that the moments are asymptotically, for fixed $k\geq 0$,
\begin{equation}\label{eq:zeta_moments}
  \frac{1}{T}\int_{T}^{2T}|\zeta(1/2+it)|^{2k}dt\sim a_k c_k (\log T)^{k^2}
\end{equation}
as $T\rightarrow\infty$, where $a_k$ is an explicit product over prime numbers and $c_k$ is as in \eqref{eq:mom_coeff}, cf.~\cite{titchmarsh86, ivic91, keasna00a}.  The cases $k=1, 2$ are proven~\cite{harlit18, ing26}; unconditional lower bounds of size $\gg_k (\log T)^{k^2}$ are known as are upper bounds $\ll_k (\log T)^{k^2}$, for $k>2$ conditional on the Riemann hypothesis~\cite{titchmarsh86, ram80b, ram78, ram80c, heabro81, heabro93, sou09, har13b, hearadsou19}.  

Interim right-tail deviations to Selberg's central limit theorem have been established~\cite{rad11, ino19} showing Gaussian decay for the likelihood that $\log|\zeta(1/2+it)|$ exceeds $V$ for the range $\sqrt{\log\log T}\ll V\ll (\log\log T)^{2/3}$. In~\cite{rad11}, by considering the model described in~\eqref{eq:zeta_model}, it is instead conjectured that for $k>0$
\begin{equation}
  \frac{1}{T}\left|\left\{T\leq t\leq 2T: |\zeta(1/2+it)|>(\log T)^k\right\}\right|= a_k c_k \int_{k\sqrt{2\log N}}^\infty e^{-u^2/2}\frac{\der u}{\sqrt{2\pi}}\left(1+o(1)\right)\label{eq:clt_ld_zeta}
\end{equation}
as $T\rightarrow\infty$ (i.e.~the deviation likelihood for $V\asymp \log\log T$). Relating again $\log T$ with matrix size $N$, one sees the similarity to \eqref{eq:clt_ld}, lending further support to \eqref{eq:clt_ld_zeta}. Numerical evidence towards \eqref{eq:clt_ld_zeta} is given in~\cite{aabhr21}, where this large deviation is related to the question of typical local maxima of $|\zeta(1/2+it)|$. In proving the sharp conditional upper bounds of \eqref{eq:zeta_moments}, in~\cite{sou09} and \cite{har13b} the bound $\ll_k (\log T)^{-k^2}$ is established.  In~\cite{argbai23} this is strengthened, unconditionally, to $\ll_k \exp(-k^2\log\log T - (1/2)\log\log\log T)$ assuming $k\in(0,2)$.  See also~\cite{argbai25} where matching (unconditional) lower bounds of the same size are shown for all $k> 0$. These bounds are consistent with \eqref{eq:clt_ld_zeta}. 

For the final part of this introduction, we turn to the derivative of $P_N(A,\theta)$.  Again, there is a related theory for the derivative of $\zeta(1/2+it)$.  Once more, a central limit theorem holds, cf.~Figure~\ref{fig:deriv_clt}.  Write $P'_N(A,\theta_j)$ for $(\der/\der\theta)P_N(A,\theta)$ evaluated at an eigenvalue $e^{i\theta_j}$ of $A$.  By computing the moment generating function
\[\mathbb{E}\left[\frac{1}{N}\sum_{n=1}^N|P_N'(A,\theta_n)|^{2k}\right]=\mathbb{E}\left[|P_N'(A,\theta_1)|^{2k}\right]\]
at finite $N$, Hughes, Keating, and O'Connell~\cite{hugkeaoco00} show
\begin{equation}\label{eq:hko_discreteclt}
  \frac{\RE\log P_N'(A,\theta_1) -S_1(N)}{\sqrt{S_2(N)}}\xrightarrow[N\rightarrow\infty]{d}\cN(0,1).
\end{equation}
The mean and variance are
\begin{align*}
  S_1(N)&=\log N + \gamma - 1 +\mathcal{O}(N^{-1})\\
  S_2(N)&=\frac{1}{2}(\log N + \gamma + 3 -3\zeta(2))+\mathcal{O}(N^{-1}),
\end{align*}
where $\gamma$ is the Euler-Mascheroni constant. They additionally showed large deviation principles, including precise asymptotics of the probability density function for the left tail.  For the right tail, the LDP for $\RE\log (P_N'(A,\theta_1)e^{-S_1})$ matches that of $\RE\log P_N(A,0)$. 

Studying the derivative of $P_N(A,\theta)$ is motivated, in part, again in connection with the Riemann zeta function. Speiser's Theorem~\cite{spe35} gives that the Riemann hypothesis is equivalent to $\zeta'(s)$ having no zeros $\rho$ with $\RE(\rho)\in(0,1/2)$. The `discrete moments' are
\begin{equation}
\label{eq:disc_mom}
J_{k}(T)=\frac{1}{\rN(T)}\sum_{0<\IM(\rho)\leq T}|\zeta'(\rho)|^{2k}
\end{equation}
where $\rho$ is a non-trivial zero of the Riemann zeta function, and $\rN(T)\sim (T/2\pi)\log(T/2\pi e)$ is the count of zeros with $0\leq \IM(\rho)\leq T$.  Bounds on the average $J_k(T)$ (with mollifiers) give lower bounds on the asymptotic proportion of simple non-trivial zeros as well as control over gaps between zeros (e.g.~\cite{conghogon98, buihb13}).

Thanks to the Hadamard factorisation of $\zeta(s)$, at least heuristically it is not hard to argue that $\log|\zeta(1/2+i\tau)|$ and $\log(|\zeta'(1/2+i\tau)|/\log T)$ are essentially the same random variable.  Indeed, under the assumption of the Riemann Hypothesis, Hejhal~\cite{hej89} proved a central limit theorem (cf.~\eqref{eq:selbergclt}) for the latter, for $\tau$ drawn uniformly from $[T, 2T]$ (made unconditional in unpublished work of Selberg~\cite{selberg_unp}). 

In a related but different direction, in~\cite{hej89} a discrete version of the central limit theorem for the derivative  is also established: 
\begin{equation}\label{eq:zeta_discreteclt}
  \frac{1}{\rN(2T)-\rN(T)}\left|\left\{T\leq \IM(\rho)\leq 2T: \frac{\RE\log\zeta'(\rho) - \log\left|\frac{1}{2\pi}\log\frac{\IM(\rho)}{2\pi}\right|}{\sqrt{1/2\log\log T}}\geq x\right\}\right|\xrightarrow[T\rightarrow\infty]{d}\cN(0,1)
\end{equation}
for fixed $x\in\mathbb{R}$, under the assumption of the Riemann hypothesis and an assumption on zero-spacing (for example, Montgomery's Pair Correlation conjecture). See also~\cite{cicek21} for an explicit rate of convergence and a discrete analogue for~\eqref{eq:selbergclt}. In \eqref{eq:zeta_discreteclt}, we write $\rho$ for a non-trivial zero of $\zeta(s)$. Making again the connection $\log T$ with matrix size $N$, one sees the connection between~\eqref{eq:hko_discreteclt} and \eqref{eq:zeta_discreteclt}.

\subsection{Results}
Following~\eqref{eq:hko_ldp} and~\eqref{eq:clt_ld_asym}, naturally the question arises of identifying the regime in which $c_k$ appears: controlling the scaling $\log |P_N(A,\theta)|/b_N$ between $b_N\asymp \sqrt{\log N}$ (Gaussian CLT) to $b_N\asymp \log N$ (the deviation regime where the multiplicative coefficient~\eqref{eq:mom_coeff} appears). This is addressed by the following result. 

\begin{theorem}\label{thm1}
  Let $\theta\in\mathbb{R}$ and draw $A\in \UU(N)$ with respect to Haar measure. Write $\rho_N$ for the probability density function for $\RE\log P_N(A,\theta)/\sqrt{Q_2(N)}$ where
  \[Q_2(N) = \frac{1}{2}(\log N + \gamma + 1) + \mathcal{O}\Big(N^{-2}\Big)\]
  is the second cumulant of $\RE\log P_N(A,\theta)$. Set, for $\alpha\geq0$, and $\kappa>0$ fixed
  \begin{align*}
    x\equiv x(N;\vare)
    &= \kappa \sqrt{\frac{(\log N)^{1+\vare}}{Q_2(N)}}\\[1em]
    \vare\equiv \vare(N;\alpha)&= 1-(\log\log N)^{-\alpha}.
  \end{align*}
  Then as $N\rightarrow\infty$, writing $n=\log\log N$ for brevity,
  \begin{align}
    \rho_N(x(N;1-n^{-\alpha})) \sim\frac{1}{\sqrt{2\pi}}\cdot e^{-\kappa^2\exp(n-n^{1-\alpha})}\cdot
    \begin{cases}
      c_\kappa,&\alpha >1\\[1em]
      c_{\frac{\kappa}{\sqrt{e}}},&\alpha =1\\[1em]
      1,&\alpha\in[0,1).
    \end{cases}\label{eq:thm1_density_alpha}
  \end{align}
  The coefficient $c_\kappa$ is given in \eqref{eq:mom_coeff}.

  Finally, setting $\alpha\equiv\alpha(N)=1\pm \frac{1}{\log N\log\log\log N}$ resolves the transitions across $\alpha =1$ in \eqref{eq:thm1_density_alpha}. 
\end{theorem}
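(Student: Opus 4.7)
The plan is to work from the Keating--Snaith moment formula
\begin{equation*}
  M_N(s) := \mathbb{E}[|P_N(A,\theta)|^{2s}] = \prod_{j=1}^{N}\frac{\Gamma(j)\Gamma(j+2s)}{\Gamma(j+s)^2},
\end{equation*}
and to invert the corresponding Mellin transform, giving
\begin{equation*}
  \rho_N(y) = \frac{\sqrt{Q_2(N)}}{\pi i}\int_{c-i\infty}^{c+i\infty}M_N(s)\exp\!\bigl(-2s\sqrt{Q_2(N)}\,y\bigr)\,ds
\end{equation*}
for any $c>0$ in the strip of absolute convergence. The natural next step is to shift the contour onto the vertical line through the saddle point $s^{*}=s^{*}(N)$ of the exponent $\Phi_N(s):=\log M_N(s)-2s\sqrt{Q_2(N)}\,y$ and carry out the classical saddle-point analysis.

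Barnes $G$ asymptotics applied termwise to $M_N$ yield, uniformly on compact subsets of $\{\RE(s)>-1/2\}$,
\begin{equation*}
  \log M_N(s) = s^{2}\log N + \log c_s + \mathcal{O}(N^{-1}),
\end{equation*}
with $c_s = \mathcal{G}^{2}(1+s)/\mathcal{G}(1+2s)$ as in \eqref{eq:mom_coeff}. The saddle equation $\partial_s\Phi_N(s^{*})=0$ then yields
\begin{equation*}
  s^{*}(N) = \kappa(\log N)^{(\vare-1)/2}(1+o(1)) = \kappa\exp\!\bigl(-n^{1-\alpha}/2\bigr)(1+o(1)),
\end{equation*}
so that $\Phi_N(s^{*}) = -\kappa^{2}\exp(n-n^{1-\alpha}) + \log c_{s^{*}} + o(1)$ and $\Phi_N''(s^{*}) = 2\log N + o(\log N)$. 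The Gaussian transverse integration along the vertical contour contributes $\sqrt{2\pi/\Phi_N''(s^{*})}$, which together with the prefactor $\sqrt{Q_2(N)}/\pi$ and $Q_2(N)\sim(1/2)\log N$ produces the $1/\sqrt{2\pi}$ in \eqref{eq:thm1_density_alpha}. The three cases of the limiting coefficient now follow directly from the asymptotic behaviour of $s^{*}$: it tends to $\kappa$ when $\alpha>1$, to $\kappa/\sqrt{e}$ when $\alpha=1$, and to $0$ (so that $c_{s^{*}}\to c_{0}=1$) when $\alpha\in[0,1)$.

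The main obstacle will be uniformity. The Barnes $G$ asymptotics and the quadratic Taylor approximation to $\Phi_N$ along the contour must be valid uniformly as $s^{*}$ depends on $N$, and this is most delicate in the regime $\alpha<1$, where $s^{*}\to 0$: there the window of $s$ in which $\log M_N(s)/\log N\approx s^{2}$ is not automatically wide enough to accommodate the Gaussian neighbourhood of the saddle, and controlling the contour tails for $|\IM s|$ not small will require off-axis bounds on $|M_N(s)|$ sharper than those that suffice in the purely large-deviation setting of \cite{fermelnik18}. For the final refinement at $\alpha(N)=1\pm 1/(\log N\log\log\log N)$, I would expand $\Phi_N(s^{*})$ and $\log c_{s^{*}}$ in powers of $\alpha-1$: at this scale the correction $(\alpha-1)\log n$ to $n^{1-\alpha}$ produces a multiplicative factor of order one inside $\exp(-\kappa^{2}e^{n-n^{1-\alpha}})$, which is precisely what separates the asymptotic behaviour on the two sides of $\alpha=1$ at the leading order, while the coefficient $c_{s^{*}}$ still stays within $o(1)$ of $c_{\kappa/\sqrt{e}}$.
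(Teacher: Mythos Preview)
Your saddle-point approach via Mellin inversion is sound in outline and would produce the stated asymptotics, but it is a genuinely different route from the paper's. The paper does not touch the contour integral for $\rho_N$ directly. Instead it quotes the exact Edgeworth-type representation from~\cite{keasna00a},
\[
  \rho_N(x)=\frac{e^{-x^2/2}}{\sqrt{2\pi}}\Bigl(1+\sum_{m\ge 3}A_m(N)\Bigl(\tfrac{i}{\sqrt{Q_2}}\Bigr)^{m}\sum_{p=0}^{m}\binom{m}{p}\mathcal{E}(m,p)(-ix)^{p}\Bigr),
\]
where the $A_m$ are built from the higher cumulants and are $\mathcal{O}(1)$ in $N$. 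Substituting $x(N;\varepsilon)$ and keeping the dominant $p=m$ term collapses the bracket to $1+\sum_{m\ge 3}A_m\bigl(2\kappa e^{-n^{1-\alpha}/2}\bigr)^{m}$, which is then identified with $c_{\kappa e^{-n^{1-\alpha}/2}}$ via the combinatorial relation \eqref{eq:comb_def_an}. The three cases and the $\alpha(N)=1\pm 1/(\log N\log\log\log N)$ refinement drop out by inspection of the limit of $e^{-n^{1-\alpha}/2}$.

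The trade-off is exactly the one you flagged. Your method is conceptually transparent (the saddle $s^{*}=\kappa e^{-n^{1-\alpha}/2}$ literally \emph{is} the argument of the moment coefficient), and it generalises to settings where no closed cumulant series is available; but you must control the Barnes-$G$ error and the vertical tails of $M_N(s)$ uniformly in $N$, which is real work, especially when $s^{*}\to 0$. The paper's route avoids all of this because the series for $\rho_N$ is exact at finite $N$ and the asymptotics reduce to elementary limits of bounded coefficients; the price is that one must have the Keating--Snaith density formula in hand. One small correction to your last paragraph: the mechanism that resolves the discontinuity at $\alpha=1$ is not a multiplicative factor inside the Gaussian exponent but rather the fact that $n^{1-\alpha(N)}=\exp(\pm 1/\log N)\to 1$, which forces $s^{*}\to\kappa/\sqrt{e}$ and hence $c_{s^{*}}\to c_{\kappa/\sqrt{e}}$; the Gaussian term itself behaves continuously at that scale.
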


Since
\[x(N;\vare)
\sim \sqrt{2}\kappa \exp\Big(\frac{1}{2}\big(1-n^{-\alpha}\big)n\Big),\]
the first exponential term in \eqref{eq:thm1_density_alpha} corresponds to the Gaussian decay.  Thus, when $\alpha\in[0,1)$, this shows that there is no additional multiplicative factor and the central limit theorem extends to this range of $x(N;1-(\log\log N)^{-\alpha})$.  For $\alpha\geq 1$, however, there is a multiplicative shift equal to the moment coefficient \eqref{eq:mom_coeff}, matching the result found in~\cite{fermelnik18} (see also~\cite{aabhr21} and \eqref{eq:clt_ld_zeta}). Allowing $\alpha$ to vary captures the point at which this coefficient enters the tail.    

The equivalent statement holds also for the imaginary part of the logarithm.
\begin{theorem}\label{thm2}
  Let $\theta\in\mathbb{R}$ and draw $A\in \UU(N)$ with respect to Haar measure. Write $\nu_N$ for the probability density function for $\IM\log P_N(A,\theta)/\sqrt{R_2(N)}$ where
  \[R_2(N) = Q_2(N)=\frac{1}{2}(\log N + \gamma + 1) + \mathcal{O}\Big(N^{-2}\Big)\]
  is the second cumulant of $\IM \log P_N(A,\theta)$.  Then with $x\equiv x(N;\vare), \alpha, \vare, \kappa, n$ as in the statement of Theorem~\ref{thm1}
  \begin{align}
    \nu_N(x(N;1-n^{-\alpha})) \sim\frac{1}{\sqrt{2\pi}}\cdot  e^{-\kappa^2\exp(n-n^{1-\alpha})}\cdot
    \begin{cases}
      d_\kappa,&\alpha >1\\[1em]
      d_{\frac{\kappa}{\sqrt{e}}},&\alpha=1\\[1em]
      1,&\alpha\in[0,1).
    \end{cases}\label{eq:thm_density_alpha_im}
  \end{align}
  Above
  \[d_\kappa = |\mathcal{G}(1+i\kappa)|^2\]
  is the coefficient of the $2\kappa$th exponential moment of $\Arg P_N(A,\theta)$.  Once again setting $\alpha\equiv\alpha(N)=1\pm \frac{1}{\log N\log\log\log N}$ resolves the discontinuity in \eqref{eq:thm_density_alpha_im}. 
\end{theorem}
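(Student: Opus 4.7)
The proof mirrors that of Theorem~\ref{thm1} in overall structure, with the two key changes being the explicit form of the moment generating function and the identification of the limiting coefficient $d_\kappa$ in place of $c_\kappa$.

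First I would record the exact moment generating function of $\IM\log P_N(A,\theta)$.  Weyl integration together with the Selberg integral give, for real $t$,
\begin{align*}
  \mathbb{E}\bigl[e^{t\IM\log P_N(A,\theta)}\bigr]
  = \prod_{j=1}^{N}\frac{\Gamma(j)^2}{|\Gamma(j+it/2)|^2}
  = \frac{|\mathcal{G}(1+it/2)|^2\,\mathcal{G}(N+1)^2}{|\mathcal{G}(N+1+it/2)|^2}.
\end{align*}
This plays exactly the structural role of the explicit MGF used for $\RE\log P_N(A,\theta)$ in Theorem~\ref{thm1}, with $|\mathcal{G}(1+it/2)|^2$ replacing the Barnes $G$ ratio relevant to that case, and reducing at $t=2\kappa$ to $d_\kappa$.

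Next, Fourier inversion gives
\begin{align*}
  \nu_N(x) = \frac{\sqrt{R_2(N)}}{2\pi}\int_{\mathbb{R}} e^{-i\xi x\sqrt{R_2(N)}}\,\mathbb{E}\bigl[e^{i\xi\IM\log P_N(A,\theta)}\bigr]\der\xi,
\end{align*}
and I would shift the contour vertically to pass through the unique real saddle $s_*$ of the exponent $s\mapsto -sx\sqrt{R_2(N)}+\log\mathbb{E}[e^{s\IM\log P_N(A,\theta)}]$.  Using the Barnes $G$ asymptotic $\log\mathcal{G}(z+1)=\frac{z^2}{2}\log z-\frac{3z^2}{4}+\frac{z}{2}\log(2\pi)+O(\log z)$ together with $R_2(N)=(\log N+\gamma+1)/2+O(N^{-2})$ one finds $s_*\sim x/\sqrt{R_2(N)}\sim 2\kappa\exp(-n^{1-\alpha}/2)$ to leading order, and evaluation of the exponent at $s_*$ produces $-\kappa^2\exp(n-n^{1-\alpha})$, which supplies the Gaussian decay factor in \eqref{eq:thm_density_alpha_im}.

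The prefactor at the saddle is $|\mathcal{G}(1+is_*/2)|^2=|\mathcal{G}(1+i\kappa\exp(-n^{1-\alpha}/2))|^2$, whose $N\to\infty$ limit is $d_\kappa$, $d_{\kappa/\sqrt{e}}$, or $|\mathcal{G}(1)|^2=1$ according as $\alpha>1$, $\alpha=1$, or $\alpha\in[0,1)$, producing the three cases in \eqref{eq:thm_density_alpha_im}.  The fine calibration $\alpha(N)=1\pm 1/(\log N\log\log\log N)$ is exactly the scale at which $\exp(-n^{1-\alpha}/2)$ starts to distinguish the three limits $1$, $e^{-1/2}$, and $0$.

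The main technical obstacle will be uniform control across the three regimes: the Barnes $G$ expansion must hold uniformly for the complex argument $N+1+is_*/2$ as $s_*$ varies with $\alpha$, and the Gaussian approximation in the shifted contour integral must produce an error that is $o(1)$ relative to $\exp(-\kappa^2\exp(n-n^{1-\alpha}))$.  Since $|\mathcal{G}(1+iz)|^2$ decays super-exponentially in $|z|$ on the real axis (itself a consequence of Barnes $G$ Stirling), the tails of the contour are easy to discard; the delicate piece, exactly as in Theorem~\ref{thm1}, is the narrow transition window around $\alpha=1$, where the saddle limits to $i\kappa\exp(-1/2)$ rather than $i\kappa$ or $0$, and hence produces the intermediate coefficient $d_{\kappa/\sqrt{e}}$.
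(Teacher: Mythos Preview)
Your outline is correct, but it takes a genuinely different route from the paper.  The paper does not run a saddle--point argument on the inverse Fourier transform.  Instead it quotes the explicit cumulant--series representation of the density from Keating--Snaith,
\[
  \nu_N(x)=\frac{1}{\sqrt{2\pi}}e^{-x^2/2}\Bigl(1+\sum_{m\geq 2}B_{2m}(N)\Bigl(\tfrac{i}{\sqrt{R_2}}\Bigr)^{2m}\sum_{p=0}^{2m}\binom{2m}{p}\mathcal{E}(2m,p)(-ix)^p\Bigr),
\]
observes that $R_2(N)=Q_2(N)$ and $B_{2m}(N)=\mathcal{O}(1)$, and then simply re-runs the arithmetic from the proof of Theorem~\ref{thm1} verbatim, evaluating the series at $x(N;1-n^{-\alpha})$ with the $p=2m$ term dominant.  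The identification $d_\kappa=e^{\kappa^2(\gamma+1)}\bigl(1+\sum_{m\geq 2}B_{2m}(2\kappa)^{2m}\bigr)=|\mathcal{G}(1+i\kappa)|^2$ is done at the end via the cumulant generating function.

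Your contour--shift approach gets to the same place by a cleaner route: the exact factorisation $\mathbb{E}[e^{t\IM\log P_N}]=|\mathcal{G}(1+it/2)|^2\cdot\mathcal{G}(N+1)^2/|\mathcal{G}(N+1+it/2)|^2$ isolates the $N$--independent prefactor immediately, so $d_\kappa$ falls out without the combinatorial step through the $B_{2m}$.  The price is that you must justify Laplace's method uniformly in $\alpha$, which you correctly flag as the main obstacle; the paper avoids this entirely because it works with an already--convergent series and only needs the elementary observation that the inner sum over $p$ is dominated at $p=2m$.  A minor point: your phrase ``the saddle limits to $i\kappa\exp(-1/2)$'' conflicts with your earlier convention that $s_*$ is real; presumably you mean the argument $1+is_*/2$ of $\mathcal{G}$.
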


As in Theorem~\ref{thm1}, \eqref{eq:thm_density_alpha_im} captures the point at which the multiplicative perturbation enters the density expression.  In this case, unsurprisingly (see also~\cite{aabhr21}) the coefficient corresponds to a moment of $\exp(\IM\log P_N(A,\theta))$.

Turning now to the derivative of the characteristic polynomial, in the following we recall that we write $P_N'(A,\theta_1)$ for $(\der/\der \theta)P_N(A,\theta)$ evaluated at an eigenvalue $e^{i\theta_1}$ of $A$. The corresponding central limit theorem (cf.~\eqref{eq:hko_discreteclt}) was established in~\cite{hugkeaoco00}.

\begin{theorem}\label{thm3}
   Draw $A\in \UU(N)$ with respect to Haar measure and  write $\varsigma_N$ for the probability density function of  $(\RE\log P_N'(A,\theta_1)-S_1(N))/\sqrt{S_2(N)}$ where
  \begin{align*}
    S_1(N)&=\log N + \gamma - 1 +  \mathcal{O}(N^{-1})\\
    S_2(N)&=\frac{1}{2}(\log N + \gamma + 3 -3\zeta(2))+ \mathcal{O}(N^{-1})
  \end{align*}
  are the first and second cumulants of $\RE\log P_N'(A,\theta_1)$, and $\zeta(s)$ is the Riemann zeta function. 
  Set for $\kappa>0$
  \begin{equation*}
    x\equiv x(N;\vare)=\kappa\sqrt{\frac{(\log N)^{1+\vare}}{S_2(N)}}
  \end{equation*}
  and as before $\vare(N;\alpha)= 1-(\log\log N)^{-\alpha}$ for $\alpha\geq 0$, and $n=\log\log N$.  Then 
  \begin{align}
    \varsigma_N(x(N;1-n^{-\alpha})) \sim\frac{1}{\sqrt{2\pi}}\cdot e^{-\kappa^2\exp(n-n^{1-\alpha})}
    \begin{cases}
      f_\kappa,&\alpha >1\\[1em]
      f_{\frac{\kappa}{\sqrt{e}}},&\alpha=1\\[1em]
      1,&\alpha\in[0,1).
    \end{cases}\label{eq:thm3_derdensity_alpha_re}
  \end{align}
  Above, 
  \[f_\kappa = \frac{\mathcal{G}^2(\kappa+2)}{\mathcal{G}(2\kappa+3)}\]
  is the coefficient of the $2\kappa$th moment of $|P_N'(A,\theta_1)|e^{-S_1}$.  As in Theorems~\ref{thm1} and~\ref{thm2}, setting $\alpha\equiv\alpha(N)=1\pm \frac{1}{\log N\log\log\log N}$ resolves the discontinuity.
\end{theorem}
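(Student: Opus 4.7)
The plan is to run the same Fourier-inversion / saddle-point scheme that drives Theorems~\ref{thm1}~and~\ref{thm2}, with the Hughes--Keating--O'Connell moment formula for $|P_N'(A,\theta_1)|$ in place of the Keating--Snaith formula. Writing $W_N=\RE\log P_N'(A,\theta_1)-S_1(N)$ and $Z_N=W_N/\sqrt{S_2(N)}$, Fourier inversion gives
\begin{equation*}
  \varsigma_N(x)=\frac{1}{2\pi}\int_{-\infty}^{\infty}e^{-itx}\,\mathbb{E}\bigl[e^{itZ_N}\bigr]\der t,
\end{equation*}
and after shifting the contour to $t=u-i\sigma$ this becomes a tilted integral whose integrand is controlled by the real-exponent moments $e^{-\sigma S_1/\sqrt{S_2(N)}}\mathbb{E}\bigl[|P_N'(A,\theta_1)|^{\sigma/\sqrt{S_2(N)}}\bigr]$. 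The key input I would take from the HKO work is the uniform-in-$s$ asymptotic
\begin{equation*}
  \mathbb{E}\bigl[|P_N'(A,\theta_1)|^{2s}e^{-2sS_1(N)}\bigr]=f_s\,N^{s^2}\bigl(1+o(1)\bigr),\qquad f_s=\frac{\mathcal{G}^2(s+2)}{\mathcal{G}(2s+3)},
\end{equation*}
for $s$ bounded above by a small constant multiple of $\sqrt{\log N}$; this follows from the HKO closed-form evaluation as a Selberg-type product together with the standard Barnes $G$-function asymptotics.

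With this asymptotic in hand, the exponent of the tilted integrand reduces to $\sigma x-\sigma^2\log N/(4S_2(N))-\log f_{\sigma/(2\sqrt{S_2(N)})}+o(1)$, and since $S_2(N)\sim (\log N)/2$ the Gaussian part is simply $\sigma x-\sigma^2/2$, stationary at $\sigma^{*}\sim x$. Substituting back yields
\begin{equation*}
  \varsigma_N(x)\sim\frac{e^{-x^2/2}}{\sqrt{2\pi}}\,f_{x/(2\sqrt{S_2(N)})},
\end{equation*}
after the standard Laplace estimate for the quadratic Gaussian integral in $u$ around the saddle. The exponential factor reproduces $e^{-\kappa^2\exp(n-n^{1-\alpha})}$ since $x^2/2\sim \kappa^2(\log N)^{\vare}=\kappa^2\exp(n-n^{1-\alpha})$, and the trichotomy comes directly from examining the argument of $f$: one has
\begin{equation*}
  \frac{x}{2\sqrt{S_2(N)}}\sim \kappa(\log N)^{(\vare-1)/2}=\kappa\exp\bigl(-\tfrac{1}{2}n^{1-\alpha}\bigr),
\end{equation*}
which tends to $0$ for $\alpha\in[0,1)$ (so $f\to f_0=1$), to $\kappa$ for $\alpha>1$ (so $f\to f_\kappa$), and to $\kappa/\sqrt{e}$ at the borderline $\alpha=1$. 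The continuity-resolving choice $\alpha=1\pm 1/(\log N\log\log\log N)$ is exactly what is needed to make $\exp\bigl(-\tfrac{1}{2}n^{1-\alpha}\bigr)$ interpolate smoothly between $e^{-1/2}$ and its two limiting values $1$ and $0$ within the allowed parameter window.

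The main obstacle I expect is producing the uniform moment expansion for $\mathbb{E}\bigl[|P_N'(A,\theta_1)|^{2s}\bigr]$ uniformly for $s$ up to order $\sqrt{\log N}$ with a multiplicative error that is small enough to survive the saddle point evaluation of $f_{s^{*}}$: unlike the Keating--Snaith case used in Theorem~\ref{thm1}, the derivative moment formula has an additional product that encodes eigenvalue repulsion near $\theta_1$, and the poles of $\mathcal{G}(2s+3)$ must be tracked carefully against the growing $N^{s^{2}+2s}$ factor. The remaining tasks --- a sub-Gaussian bound on $|\mathbb{E}[e^{itZ_N}]|$ away from the saddle (also extractable from the HKO product), and bookkeeping of the $\mathcal{O}(N^{-1})$ corrections in $S_1(N)$ and $S_2(N)$ --- are routine and parallel the corresponding steps in Theorems~\ref{thm1}~and~\ref{thm2}.
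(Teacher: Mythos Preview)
Your approach is correct in outline and reaches the same trichotomy, but it is genuinely different from the paper's proof. The paper does not run a saddle-point / exponential-tilting argument on the Fourier integral. Instead it derives an \emph{exact} Edgeworth-type expansion for the density at finite $N$ (Lemma~\ref{lemma1}):
\[
  \varsigma_N(x)=\frac{e^{-x^2/2}}{\sqrt{2\pi}}\Bigl(1+\sum_{m\ge3}C_m(N)\Bigl(\tfrac{i}{\sqrt{S_2}}\Bigr)^{m}\sum_{p=0}^m\binom{m}{p}\mathcal{E}(m,p)(-ix)^p\Bigr),
\]
obtained by expanding $\exp\bigl(\sum_{m\ge3}S_m(iy)^m/m!\bigr)$ inside the inversion integral and integrating term by term, exactly as for $\rho_N$ in Theorem~\ref{thm1}. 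From there the analysis is identical to Section~\ref{sec:proofthm1}: plug in $x(N;\vare)$, note that the inner sum is dominated by $p=m$, and read off the multiplicative factor as $\exp(\kappa^2(\gamma+3-3\zeta(2))e^{-n^{1-\alpha}})\bigl(1+\sum_{m\ge3}C_m(2\kappa e^{-n^{1-\alpha}/2})^m\bigr)$, which is then identified with $f_\kappa$ via the cumulant generating function.

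The practical difference is precisely the obstacle you flag. Your route requires the asymptotic $\mathbb{E}[|P_N'(A,\theta_1)|^{2s}e^{-2sS_1}]\sim f_sN^{s^2}$ \emph{uniformly} in $s$ up to order $\sqrt{\log N}$, together with separate decay bounds on the characteristic function away from the saddle. The paper's route avoids both: the density formula is exact at finite $N$, the only inputs are $S_2(N)\sim\tfrac12\log N$ and $S_m(N)=\mathcal{O}(1)$ for $m\ge3$ (hence $C_m(N)=\mathcal{O}(1)$), and the identification of the limiting constant with $f_\kappa$ is a one-line computation from the cumulant expansion, not from a uniform Barnes-$G$ estimate. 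Your argument is closer in spirit to the mod-Gaussian machinery of F\'eray--M\'eliot--Nikeghbali; it would work, but it imports analytic overhead that the paper simply bypasses.
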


It is possible to also prove a central limit theorem for the imaginary part of the logarithm for the derivative of $P_N(A,\theta)$, evaluated at an eigenvalue of $A$. This complements the result~\eqref{eq:hko_discreteclt} of~\cite{hugkeaoco00}.  As is to be expected, the imaginary part displays a symmetry lacking in the case of the real part. 
\begin{theorem}\label{thm4}
  Draw $A\in\UU(N)$ with respect to Haar measure and write $e^{i\theta_1}$ for an eigenvalue of $A$. Denote by $(T_j(N))_{j\geq 1}$ the sequence of cumulants of $\IM\log P_N'(A,\theta_1)$.  Then as $N\rightarrow\infty$ the following convergence in law holds
  \[\frac{\IM\log P_N'(A,\theta_1)-T_1(N)}{\sqrt{T_2(N)}}\rightarrow \mathcal{N}(0,1)\]
  and $T_1(N)=-\pi/2$, $T_2(N)=Q_2(N)\sim \frac{1}{2}\log N$. 
\end{theorem}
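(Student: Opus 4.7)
The strategy is to compute the characteristic function of $\IM\log P_N'(A,\theta_1)$ exactly by reducing to a Heine--Selberg Toeplitz determinant, Taylor-expand in the dual variable to read off the cumulants, and conclude by the method of cumulants.

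By rotational invariance of Haar measure, I condition on $\theta_1 = 0$; the remaining eigenangles $(\theta_j)_{j=2}^N$ then form the Palm/Jacobi-weighted $\mathrm{CUE}(N-1)$ ensemble with joint density proportional to $\prod_{2\leq j<k}|e^{i\theta_j}-e^{i\theta_k}|^2 \prod_{j\geq 2}|1-e^{i\theta_j}|^2$. Differentiating the product formula $P_N(A,\theta) = \prod_j(1-e^{i(\theta+\theta_j)})$ and evaluating at the eigenangle gives $P_N'(A,\theta_1) = -i\prod_{j\geq 2}(1-e^{i\theta_j})$, so
\[
  \IM\log P_N'(A,\theta_1) = -\tfrac{\pi}{2} + Y, \qquad Y := \sum_{j\geq 2}\IM\log(1-e^{i\theta_j}).
\]
The conditional measure is invariant under $\theta_j\mapsto -\theta_j$ while $Y$ is odd under this involution (since $\IM\log(1-e^{i\theta}) = -\IM\log(1-e^{-i\theta})$), so $\mathbb{E}[Y]=0$ and every odd cumulant of $Y$ vanishes; in particular $T_1(N)=-\pi/2$ and $T_{2m+1}(N)=0$ for all $m\geq 1$.

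For the characteristic function, the identity $e^{it\IM\log w}=(w/\bar w)^{t/2}$ combines the observable with the Jacobi weight into the Fisher--Hartwig symbol $f_t(\theta) = (1-e^{i\theta})^{1+t/2}(1-e^{-i\theta})^{1-t/2}$, and Heine's identity yields $\mathbb{E}[e^{itY}] = D_{N-1}(f_t)/D_{N-1}(f_0)$. The Selberg integral on the torus evaluates both determinants exactly:
\[
  D_{N-1}(f_t) = \prod_{j=0}^{N-2}\frac{\Gamma(j+1)\Gamma(j+3)}{\Gamma(j+2+t/2)\Gamma(j+2-t/2)}, \qquad D_{N-1}(f_0)=N,
\]
the latter by a telescoping product. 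Taylor-expanding $\log\mathbb{E}[e^{itY}]$ in $t$ then writes the even cumulants as polygamma sums; in particular $T_2(N) = \tfrac{1}{2}\sum_{n=2}^N \psi'(n)$, and more generally $T_{2m}(N) = \tfrac{(-1)^{m+1}}{2\cdot 4^{m-1}}\sum_{n=2}^N\psi^{(2m-1)}(n)$ for $m\geq 1$. The classical asymptotics $\psi'(n)=n^{-1}+O(n^{-2})$ and $\psi^{(k)}(n) = O(n^{-k})$ for $k\geq 2$ then yield $T_2(N) = \tfrac{1}{2}\log N + O(1)$ and $T_{2m}(N) = O(1)$ for every $m\geq 2$.

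Consequently, for the normalised variable $(\IM\log P_N' - T_1)/\sqrt{T_2}$ the second cumulant equals $1$ while the $m$-th is $O((\log N)^{-m+1})\to 0$ for $m\geq 2$, and the method of cumulants delivers the Gaussian limit. The main technical obstacle is the exact evaluation of the Selberg integral with non-integer exponents at the Fisher--Hartwig singularity $\theta=0$, together with the bookkeeping required to extract cumulants of all orders; once the closed-form expression for $D_{N-1}(f_t)$ is in hand, the remaining analysis reduces to polygamma asymptotics, and matching $T_2(N) = Q_2(N)$ precisely uses the same Euler--Maclaurin identity $\sum_{n=1}^N\psi'(n) = \log N + \gamma + 1 + O(N^{-2})$ that underlies the computation of $Q_2(N)$ in Theorems~\ref{thm1}--\ref{thm2}.
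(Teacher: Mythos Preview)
Your proposal is correct and follows essentially the same route as the paper: the conditioning on $\theta_1$ is the paper's Lemma~\ref{lemma2}, your Selberg/Heine evaluation of $D_{N-1}(f_t)$ is exactly the paper's product formula~\eqref{eq:mgf_imag_deriv_2} (obtained there by citing the known joint moments of $\RE$ and $\IM$), and the cumulant extraction via polygamma sums matches~\eqref{eq:imag_deriv_cumulants}. One small slip: your own formula gives $T_2(N)=\tfrac{1}{2}\sum_{n=2}^N\psi'(n)$ while $Q_2(N)=\tfrac{1}{2}\sum_{n=1}^N\psi'(n)$, so $T_2(N)=Q_2(N)-\zeta(2)/2$ rather than an exact equality; this is consistent with the paper's explicit value $T_2(N)=\tfrac{1}{2}(\log N+\gamma+1-\zeta(2))+O(N^{-2})$ and is of course irrelevant for the CLT.
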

Hence it is possible to prove the equivalent result to Theorem~\ref{thm3}. 
\begin{theorem}\label{thm5}
  Draw $A\in\UU(N)$ with respect to Haar measure and write $e^{i\theta_1}$ for an eigenvalue of $A$. Write $\tau_N$ for the probability density function for $(\IM\log P_N'(A,\theta_1)-T_1(N))/\sqrt{T_2(N)}$ where $T_1(N)=-\pi/2$ and
  \[ T_2(N)=\frac{1}{2}(\log N + \gamma + 1-\zeta(2)) + \mathcal{O}\Big(N^{-2}\Big)\]
  are the first and second cumulants of $\IM\log P_N'(A,\theta_1)$. Then with $x\equiv x(N;\vare), \alpha, \kappa, n$ as in the statement of Theorem~\ref{thm1},
  \begin{align}
    \tau_N(x(N;1-n^{-\alpha})) \sim\frac{1}{\sqrt{2\pi}}\cdot e^{-\kappa^2\exp(n-n^{1-\alpha})}
    \begin{cases}
      g_\kappa,&\alpha >1\\[1em]
      g_{\frac{\kappa}{\sqrt{e}}},&\alpha=1\\[1em]
      1,&\alpha\in[0,1).
    \end{cases}\label{eq:thm5_derdensity_alpha_im}
  \end{align}
  Above,
  \[g_\kappa = |\mathcal{G}(1+i\kappa)|^2\]
  is the coefficient of the $2\kappa$th exponential moment of $\Arg P_N'(A,\theta_1)-T_1$.  As in Theorems~\ref{thm1},~\ref{thm2} and~\ref{thm3}, setting $\alpha\equiv\alpha(N)=1\pm \frac{1}{\log N\log\log\log N}$ resolves the discontinuity.
\end{theorem}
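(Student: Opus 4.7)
The plan mirrors the arguments of Theorems~\ref{thm2} and~\ref{thm4}: Fourier-invert the characteristic function and evaluate it by shifting the contour onto $t = s - ix$, $s\in\mathbb{R}$, where the quadratic exponent reduces to $-x^2/2 - s^2/2$. Concretely,
\[
\tau_N(x) = \frac{1}{2\pi}\int_{-\infty}^{\infty} e^{-itx}\,\phi_N(t)\,\der t,
\quad
\phi_N(t) := \mathbb{E}\!\left[\exp\!\left(it\,\frac{\IM\log P_N'(A,\theta_1) - T_1(N)}{\sqrt{T_2(N)}}\right)\right].
\]

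The main input is an asymptotic for $\phi_N(t)$ on a window wide enough to contain the saddle. Parametrising the exponential moment by identifying $e^{ih\,\IM\log P_N'} = (P_N'/|P_N'|)^h$ with $(P_N')^{\alpha}(\overline{P_N'})^{\beta}$ for $\alpha + \beta = 0$, $\alpha - \beta = h = t/\sqrt{T_2(N)}$, I would apply the eigenvalue-averaging identity $\mathbb{E}[F(\theta_1)] = \mathbb{E}[N^{-1}\sum_j F(\theta_j)]$ used in~\cite{hugkeaoco00} and evaluate the resulting Selberg-type CUE integral for the derivative at an eigenvalue. The expected answer is
\[
\phi_N(t) \sim \exp\!\left(-\frac{t^2}{2}\right)\cdot \mathcal{G}\!\left(1 + \tfrac{t}{2\sqrt{T_2(N)}}\right)\mathcal{G}\!\left(1 - \tfrac{t}{2\sqrt{T_2(N)}}\right),
\]
with the Gaussian factor arising from $N^{-t^2/(4T_2(N))}$ and $T_2(N) \sim \tfrac12\log N$; the rigid shift $T_1(N) = -\pi/2$ is absorbed by the normalisation. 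Inserting this into the Fourier integral and performing Gaussian quadrature on the shifted contour gives
\[
\tau_N(x) \sim \frac{1}{\sqrt{2\pi}}\,e^{-x^2/2}\,\bigl|\mathcal{G}\bigl(1 + \tfrac{ix}{2\sqrt{T_2(N)}}\bigr)\bigr|^{2} = \frac{1}{\sqrt{2\pi}}\,e^{-x^2/2}\,g_{x/(2\sqrt{T_2(N)})}.
\]
With $x = x(N;1-n^{-\alpha})$ the elementary computations $x^2/2 \sim \kappa^2\exp(n - n^{1-\alpha})$ and $x/(2\sqrt{T_2(N)}) \sim \kappa\exp(-\tfrac12 n^{1-\alpha})$ show that the Barnes G argument tends to $\kappa$, to $\kappa/\sqrt{e}$, or to $0$ (at which $|\mathcal{G}(1)|^2 = 1$) in the three regimes $\alpha > 1$, $\alpha = 1$, $\alpha \in [0,1)$, respectively, yielding \eqref{eq:thm5_derdensity_alpha_im}. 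Continuity of $|\mathcal{G}(1 + i\,\cdot\,)|^2$ supplies the resolution at $\alpha = 1 \pm (\log N\log\log\log N)^{-1}$.

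The hard part will be establishing the uniform expansion of $\phi_N(t)$ on the full saddle-point window, of width $\asymp \sqrt{T_2(N)}$. For bounded $t$ this is essentially Theorem~\ref{thm4}; for $|t|$ as large as $\sqrt{T_2(N)}$ one must quantify the error in the Selberg-type evaluation of the joint moment of the derivative at an eigenvalue while tracking Barnes G asymptotics for imaginary arguments of moderate size. The structure parallels the real-part analogue in Theorem~\ref{thm3}, but with the Barnes G factor evaluated on the imaginary axis rather than the real axis, and the tools --- Mellin--Barnes representations, Stirling-type estimates for $\mathcal{G}$, and the Heine--Szeg\H{o} form of CUE averages --- are the same. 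Once the uniform expansion is secured, the rest of the proof reduces to a routine Gaussian saddle-point calculation together with the continuity/regime splitting already used in Theorems~\ref{thm1}--\ref{thm3}.
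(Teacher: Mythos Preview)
Your approach is sound but differs from the paper's. You propose a saddle--point/tilting analysis of the Fourier inversion integral, writing $\phi_N(t)$ asymptotically as a Gaussian times a Barnes~$G$ factor and then shifting the contour through $t=-ix$. The paper instead follows the Keating--Snaith template: starting from the exact finite--$N$ product formula
\[
\mathbb{E}\bigl[e^{is\,\IM\log P_N'(A,\theta_1)}\bigr]=\frac{(-i)^s}{N}\prod_{\ell=1}^{N-1}\frac{\Gamma(\ell)\Gamma(\ell+2)}{\Gamma(\ell+1+s/2)\Gamma(\ell+1-s/2)}
\]
(obtained by reducing to a joint $\UU(N-1)$ average), it reads off the cumulants $T_m(N)$, writes the density $\tau_N$ as the \emph{exact} Edgeworth--type series
\[
\tau_N(x)=\frac{e^{-x^2/2}}{\sqrt{2\pi}}\Bigl(1+\sum_{m\geq 2}D_{2m}(N)\Bigl(\tfrac{i}{\sqrt{T_2}}\Bigr)^{2m}\sum_{p=0}^{2m}\binom{2m}{p}\mathcal{E}(2m,p)(-ix)^p\Bigr),
\]
and then simply observes that the $p$--sum is dominated at $p=2m$, after which the regime analysis from Theorem~\ref{thm1} carries over verbatim. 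The identification $g_\kappa=|\mathcal{G}(1+i\kappa)|^2$ drops out by matching this series to the cumulant expansion of the exponential moment.

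What each buys: the paper's route sidesteps precisely the ``hard part'' you flag. Because the generating function is known \emph{exactly} at finite $N$, the density series is exact rather than asymptotic, and no uniform control of $\phi_N$ on a window of width $\asymp\sqrt{\log N}$ is needed---one only has to check that $D_{2m}(N)=\mathcal{O}(1)$ and that lower--order $p$ terms are negligible. Your saddle--point packaging $\tau_N(x)\sim(2\pi)^{-1/2}e^{-x^2/2}\,g_{x/(2\sqrt{T_2})}$ is elegant and morally equivalent, but to make it rigorous you must control the error in $\phi_N(t)\sim e^{-t^2/2}\mathcal{G}(1+t/(2\sqrt{T_2}))\mathcal{G}(1-t/(2\sqrt{T_2}))$ uniformly in $|t|\lesssim\sqrt{\log N}$; note also that this asymptotic as stated conflates $N^{-t^2/(4T_2)}$ with $e^{-t^2/2}$, which differ by a factor $e^{t^2(\gamma+1-\zeta(2))/(2\log N)+\cdots}$ that is of order one at the saddle and must be tracked (in the paper it is exactly this subleading piece of $T_2$ that produces the $e^{\kappa^2(\gamma+1-\zeta(2))}$ inside $g_\kappa$). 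So your outline is correct, but the Edgeworth route is shorter here.
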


\begin{rem}
  In~\cite{hugpea25a}, Hughes and Pearce-Crump conjecture that under the Riemann hypothesis, for $\RE(\kappa)>-3$,
  \[\frac{1}{\rN(T)}\sum_{0<\IM(\rho)\leq T}\zeta'(\rho)^\kappa\sim \frac{1}{\Gamma(\kappa+2)}\left(\log\frac{T}{2\pi}\right)^{\kappa}\]
  (as well as expressions for higher and mixed derivatives).  This differs from \eqref{eq:disc_mom} due to the absence of the absolute value.  Justification for the conjecture is via the `Hybrid-model'~\cite{gonhugkea07, buigonmil15} and comparison with the equivalent random matrix computation:
  \begin{equation}\label{eq:disc_non_abs_rmt}
    \mathbb{E}\left[P_N'(A,\theta_1)^{\kappa}\right]\sim \frac{(-i)^\kappa}{\Gamma(\kappa+2)}N^{\kappa}.
  \end{equation}
  The results (Theorems~\ref{thm3}--~\ref{thm5}) are in line with this conjecture.  Written another way, \eqref{eq:disc_non_abs_rmt} concerns
  \[\mathbb{E}\left[P_N'(A,\theta_1)^{\kappa}\right]=\mathbb{E}\left[e^{\kappa\left(\RE\log P_N'(A,\theta_1)+i\IM\log P'_N(A,\theta_1)\right)}\right]\sim \mathbb{E}\left[e^{\kappa \RE\log P_N'(A,\theta_1)}\right]\mathbb{E}\left[e^{i\kappa\IM\log P'_N(A,\theta_1)}\right] .\]
  It is possible to extend the analysis from the proofs of Theorem~\ref{thm3}--\ref{thm5} (effectively as in~\cite{keasna00a} for the non-derivative case) to show that indeed both $\RE\log P_N'(A,\theta_1)$ and $\IM\log P_N'(A,\theta_1)$ are independent in the limit, justifying the expectation splitting. Then, the right hand side approaches the product of the moment generating functions of two Gaussian random variables (cf.~Theorem~\ref{thm3} and Theorem~\ref{thm4}).  The linear growth in $N$ comes from the shifted mean of $\RE\log P_N'(A,\theta_1)$ and the quadratic components cancel between the real and imaginary parts. 
\end{rem}

\begin{rem}
  There are many natural extensions one can consider.  For example, the other unitary ensembles $\operatorname{C}$$\beta$$\operatorname{E}$ for $\beta=1$ and $4$ can be handled in much the same way as described in Section~\ref{sec:proofs} for $\beta=2$.  Indeed, the central limit theorems for $\RE\log P_{N,\beta}(A,\theta)$ and $\RE\log P_{N,\beta}'(A, \theta_1)$ (where $A$ is drawn from the $\operatorname{C}$$\beta$$\operatorname{E}$) for both $\beta=1$ and $\beta=4$ have been proved~\cite{keasna00a, hugkeaoco00}.  The density function expressions necessary to derive the interim regimes can be written down similarly, cf.~Lemma~\ref{lemma1} and~\cite{keasna00a}.

  Alternatively, one could consider $P_N(A,\theta)$ for $A\in\Sp(2N)$ or $A\in \SO(2N)$ averaged over the $\operatorname{CUE}$ ($\beta=2$).  Since matrices from these compact groups have eigenvalues appearing in complex conjugate pairs it is natural to average at the symmetry point $\theta=0$. Once again, central limit theorems for $\log P_N(A,0)$ are known~\cite{keasna00b} for symplectic or special orthogonal $A$, with connections to different number theoretic averages. The results described within readily extend to these cases additionally.  
\end{rem}

\begin{rem}
Finally, we remark that the Theorems~\ref{thm1}--\ref{thm5} extend to number theoretic conjectures for $\zeta(1/2+i t)$ and its derivative. For example, taking $\kappa>0, t=\log\log\log T$, analogously defining $x$, and writing $\tilde{\rho}_N$ for the density function of $\RE\log\zeta(1/2+i \tau)/\sqrt{(1/2)\sum_{p\leq T}p^{-1}}$ where $\tau$ is uniform from $[T,2T]$,  we would expect
\begin{equation}\label{conj:zeta}
\tilde\rho_N(x(\log T; 1-t^{-\alpha})) \sim \frac{1}{\sqrt{2\pi}}e^{-\kappa^2\exp(t-t^{1-\alpha})}
\begin{cases}
    a_\kappa c_\kappa,&\alpha>1\\
    a_{\frac{\kappa}{\sqrt{e}}}c_{\frac{\kappa}{\sqrt{e}}},&\alpha=1\\
    1,&\alpha\in[0,1),
\end{cases} 
\end{equation}
where $a_\kappa c_\kappa$ is the coefficient of the $2\kappa$th moment of $|\zeta(1/2+i \tau)|$, see \eqref{eq:zeta_moments}. The conjecture~\eqref{eq:clt_ld_zeta} aligns with the regime $\alpha>1$. 
\end{rem}

\section{Proofs}\label{sec:proofs}

In this section we prove Theorems~\ref{thm1}--\ref{thm5}.  The idea of the proofs is to manipulate explicit formulae for the probability density functions of the appropriate random variables.  For Theorems~\ref{thm1} and~\ref{thm2}, these were computed in~\cite{keasna00a}.  For Theorem~\ref{thm3}, the formula is developed here following an explicit formula for the moments from~\cite{hugkeaoco01}. For Theorem~\ref{thm4} and~\ref{thm5}, the exponential moments of $\IM \log P_N'(A,\theta_1)$ are calculated, and hence the formula for the density function follows.  This establishes both the central limit theorem as well as the interpolating statistics. 

\subsection{Proof of Theorem~\ref{thm1}}\label{sec:proofthm1}

In \cite{keasna00a}, the probability density function of $\RE\log P_N(A,\theta)$ is found by inverting the associated generating function
\begin{equation}\label{eq:re_mgf}
  \mathbb{E}\left[e^{s\RE\log P_N(A,\theta)}\right] = \prod_{j=1}^N \frac{\Gamma(j)\Gamma(j+s)}{\Gamma^2(j+s/2)},
\end{equation}
valid for $\RE(s)\geq1$. Related are the cumulants $(Q_j(N))_{j\geq 1}$, defined via
\[\log \mathbb{E}\left[e^{s\RE\log P_N(A,\theta)}\right] = \sum_{j\geq 1}\frac{Q_j(N)}{j!}s^j,\]
so
\begin{align*}
  Q_1(N)&=0\\
  Q_2(N)&=\frac{1}{2}\log N + \frac{1}{2}(\gamma+1) + \mathcal{O}\Big(\frac{1}{N^2}\Big)\\
  Q_j(N)&=(-1)^j\frac{2^{j-1}-1}{2^{j-1}}\Gamma(j)\zeta(j-1) + \mathcal{O}\Big(\frac{1}{N^{j-2}}\Big)\qquad j\geq 3.
\end{align*}
These imply that as $N\rightarrow\infty$, $\RE\log P_N(A,\theta)/\sqrt{Q_2(N)}$ will satisfy a central limit theorem. Write $\rho_N(x)$ for its probability density function, which takes the form (cf. Eq.~(53) of~\cite{keasna00a})
\begin{align}
  \rho_N(x)=
  \frac{1}{\sqrt{2\pi}}e^{-x^2/2}\left(1+\sum_{m\geq3} A_m(N)\left(\frac{i}{\sqrt{Q_2(N)}}\right)^m\sum_{p=0}^m\binom{m}{p}\mathcal{E}(m,p)\;(-ix)^p
  \right)
  \label{eq:rho}, 
\end{align}
where the factors $A_m(N)$ are defined via\footnote{See also~Eq.~(52) of~\cite{keasna00a} or Appendix A of~\cite{aabhr21}.}
\begin{equation}\label{eq:comb_def_an}
1 + \sum_{m\geq 3}A_m(N) u^m = \exp\left(\sum_{m\geq 3}\frac{Q_m(N)}{m!}u^m\right)
\end{equation}
and $\mathcal{E}(m,p)$ is given by
\begin{equation}\label{eq:ks_efun}
  \mathcal{E}(m,p)=
  \begin{cases}
    (m-p-1)!! &\text{$m-p$ even}\\
    0 &\text{$m-p$ odd}.
  \end{cases}
\end{equation}
Importantly, $(A_m)_{m\geq 3}$ involves only a finite number of the cumulants of index at least $3$, so for example $A_3 = Q_3/3!=-\pi^2/24 + \mathcal{O}(1/N)$ and for all $m\geq 3$, $A_m(N)=\mathcal{O}(1)$. 

As in the statement of the theorem, take
\begin{equation}\label{eq:x_val}
  x\equiv x(N;\vare)=\kappa \sqrt{\frac{(\log N)^{1+\vare}}{Q_2(N)}}
\end{equation}
and consider briefly $\vare$ fixed. Then $x(N;0)\sim \sqrt{2}\kappa$ corresponds to the \emph{central limit theorem} regime~\eqref{eq:ks_clt_reim}, and $x(N;1)\sim \kappa\sqrt{2\log N}$ to the large deviation regime \eqref{eq:clt_ld}. The role of $\vare\geq0$ hence can interpolate between these two regimes. We will hence choose $\vare$ to depend on $N$ in the following way. Define for $\alpha\in\mathbb{R}$ fixed
\[
\vare\equiv\vare(N;\alpha) = 1 - (\log\log N)^{-\alpha}.
\]
Then for $\alpha>0$ we have, $\vare(N;\alpha)\rightarrow 1$ and 
\begin{align*}
  x(N;\vare)
  &=\kappa\exp\left(\frac{1}{2}\log\frac{(\log N)^{1+\vare}}{Q_2}\right) \\
  &\sim \kappa \exp\left(\frac{1}{2}\log(2\log N) -\frac{1}{2}(\log\log N)^{1-\alpha}\right).
\end{align*}
If therefore $\alpha>1$ then $x\sim \kappa\sqrt{2\log N}$. If $\alpha \in (0,1]$ then $x=o(\sqrt{\log N})$. Comparatively, if $\alpha=0$ then $\vare=0$ and $x\sim\sqrt{2}\kappa$.
  
      
Hence, evaluating \eqref{eq:rho} at \eqref{eq:x_val}, using that the sum over $p$ is dominated at $p=m$, and writing $n=\log\log N$ we find
\begin{align}
  \rho_N(x(N;\vare))
    &\sim
    \frac{1}{\sqrt{2\pi}}\exp\Big(-\frac{1}{2}\kappa^2\exp\Big(\Big(2-n^{-\alpha}\Big)n-\log\Big(\frac{1}{2}(\log N + \gamma + 1)+\mathcal{O}(N^{-2})\Big)\Big)\Big)\nonumber\\
    &\qquad\qquad\qquad\cdot\left(1+\sum_{m\geq 3} A_m\left(\kappa\frac{(\log N)^{\frac{1+\vare}{2}}}{\frac{1}{2}\log N}\right)^m\right)\nonumber\\
    &\sim
    \frac{1}{\sqrt{2\pi}}\exp\left(-\kappa^2\exp\Big(n-n^{1-\alpha}\Big)\right)\exp\left(\kappa^2(\gamma+1)e^{-n^{1-\alpha}}\right)\left(1+\sum_{m\geq3} A_m\left(2\kappa e^{-\frac{1}{2}n^{1-\alpha}}\right)^m\right).\label{eq:thm1_density_proof}
\end{align}

To establish \eqref{eq:thm1_density_alpha}, we compare the different ranges of $\alpha$. Firstly, if $\alpha=0$, so $x\sim \sqrt{2}\kappa$, corresponding to the central limit theorem regime, then the terms in the sum all vanish in the limit (recalling that $A_m(N)=\mathcal{O}(1)$ in $N$) and the large-$N$ behaviour of \eqref{eq:thm1_density_proof} is as expected $\sim \exp(-x^2/2)/\sqrt{2\pi}$. Indeed, this statement holds for any $\alpha\in[0,1)$.  Therefore, the multiplicative perturbation found in \eqref{eq:clt_ld_asym} (corresponding to $x\asymp \sqrt{\log N}$) does not appear for $\alpha<1$.

Instead at $\alpha>1$, so $\vare\rightarrow 1$ at a speed faster than $1/\log\log N$, the term $n^{1-\alpha}$ instead vanishes in the limit, yielding
\begin{align*}
  \rho_N(x)&\sim \frac{1}{\sqrt{2\pi}}\exp\left(-\kappa^2(n-n^{1-\alpha})\right)\cdot e^{\kappa^2(\gamma+1)}\left(1+\sum_{m\geq 3}A_m (2\kappa)^m\right).
\end{align*}
We conclude by noting that $c_\kappa =1+\sum_{m\geq 3}A_m (2\kappa)^m$ is the coefficient of the $2\kappa$th moment of $|P_N(A,0)|$. Indeed, by~\cite{keasna00a} (see also~\cite{aabhr21}, (A18)) we have
\[c_\kappa = \lim_{N\rightarrow\infty}\mathbb{E}[|P_N(A,0)|^{2\kappa}]N^{-\kappa^2}=\lim_{N\rightarrow\infty}\exp\left(\kappa^2Q_2(N)-\kappa^2\log N +\sum_{m\geq 3}\frac{Q_m(N)}{m!}(2\kappa)^m\right).\]
This proves the claim after recalling \eqref{eq:comb_def_an}.

At the value $\alpha=1$ (so $\vare=1-n^{-1}$) we see
\[
\rho_N(x(N;1-1/n)) \sim \frac{1}{\sqrt{2\pi}}\exp\left(-\frac{\kappa^2}{e}e^n\right)e^{\frac{\kappa^2}{e}(\gamma+1)}\left(1+\sum_{m\geq 3}A_m \Big(\frac{2k}{\sqrt{e}}\Big)^m\right).
\]
Collectively this shows \eqref{eq:thm1_density_alpha}.  

There is a discontinuity between the ranges of $\alpha$ in \eqref{eq:thm1_density_alpha} either side of $\alpha=1$.  By setting $\alpha\equiv\alpha(N)=1\pm(\log N\log\log\log N)^{-1}$, one can resolve this discontinuity.  Indeed take $\alpha(N) = 1+(\log N\log\log\log N)^{-1}$.  Then
\[n^{1-\alpha} 
=\exp\Big(\frac{1}{\log N}\Big)\]
so as $N\rightarrow\infty$, $\alpha(N)\rightarrow1$ from above and $\exp(-n^{1-\alpha})\rightarrow 1/e$.  Similarly, the discontinuity is smoothed to the left of $\alpha=1$ by $\alpha(N)=1-(\log N\log\log\log N)^{-1}$. This concludes the proof of Theorem~\ref{thm1}.

\subsection{Proof of Theorem~\ref{thm2}}

The proof of Theorem~\ref{thm2} follows almost verbatim to Theorem~\ref{thm1}, so we just highlight the key differences.  The appropriate generating function for the imaginary part is
\begin{equation}\label{eq:mgf_imag}
  \mathbb{E}\left[\left(\frac{P_N(A,\theta)}{\overline{P_N(A,\theta)}}\right)^{s}\right]=\mathbb{E}\left[e^{2is\IM \log P_N(A,\theta)}\right]=\prod_{j=1}^N\frac{\Gamma^2(j)}{\Gamma(j+s)\Gamma(j-s)},
\end{equation}
for $s\in\mathbb{C}$. We recall that the imaginary part of the logarithm is defined to have a jump discontinuity of $\pi$ as $\theta$ crosses an eigenangle $\theta_j$. As for the real part, we need the density function $\nu_N$ for $\IM\log P_N(A,\theta)/\sqrt{R_2(N)}$, found by inverting \eqref{eq:mgf_imag}. It can again be expressed via the cumulants
\[\log \mathbb{E}\left[e^{t\IM\log P_N(A,\theta)}\right] = \sum_{j\geq 1}\frac{R_j(N)}{j!}t^j\]
which satisfy~\cite{keasna00a},
\begin{equation}
  \begin{alignedat}{2}\label{eq:imag_cumulants} 
     R_{2j}(N)&=\frac{(-1)^{j+1}}{2^{2j-1}}\sum_{\ell=1}^N\Psi^{(2j-1)}(\ell)=\frac{(-1)^{j+1}}{2^{2j-1}-1}Q_{2j}(N)\\
     R_{2j-1}(N)&=0
  \end{alignedat}
\end{equation}
for all $j\geq 1$, where $Q_j(N)$ are the cumulants for the real part of the logarithm, see Section~\ref{sec:proofthm1}, and $\Psi^{(n)}(z)=\frac{\der^{n+1}}{\der z^{n+1}}\log\Gamma(z)$ is the Polygamma function. Again this implies the central limit theorem~\eqref{eq:ks_clt_reim} but also emphasises a symmetry that is not present for the real part of the logarithm. Note that $R_2(N)=Q_2(N)$. The density function, $\nu_N$ for $\IM\log P_N(A,\theta)/\sqrt{R_2(N)}$ is then
\begin{equation}\label{eq:imag_density}
  \nu_N(x) = \frac{1}{\sqrt{2\pi}}e^{-x^2/2}\left(1+\sum_{m\geq2}B_{2m}\left(\frac{i}{\sqrt{R_2(N)}}\right)^{2m}\sum_{p=0}^{2m}\binom{2m}{p}\mathcal{E}(2m,p)\;(-ix)^p
  \right)
\end{equation}
where again the sequence $(B_m(N))_{m\geq 4}$ is defined combinatorially via 
\begin{equation}\label{eq:comb_def_bn}
1 + \sum_{m\geq 2}B_{2m}(N) u^{2m} = \exp\left(\sum_{m\geq 2}\frac{R_{2m}(N)}{(2m)!}u^{2m}\right),
\end{equation}
cf.~\cite{keasna00a}, Eq.~(67), and $\mathcal{E}(m,p)$ is given by \eqref{eq:ks_efun}. Once more $B_m(N)=\mathcal{O}(1)$ for $m\geq 4$, for example $B_4(N)=-3\zeta(3)/4+\mathcal{O}(1/N^2)$. Therefore, the analysis of Section~\ref{sec:proofthm1} goes through verbatim, yielding \eqref{eq:thm_density_alpha_im}, where it just remains to establish the different multiplicative coefficient $d_\kappa$. As demonstrated in~\cite{keasna00a},
\begin{equation}\label{eq:imag_coeff}
d_\kappa=\lim_{N\rightarrow\infty}\frac{\mathbb{E}\left[e^{2\kappa\IM\log P_N(A,\theta)}\right]}{N^{\kappa^2}} =|\mathcal{G}(1+i\kappa)|^2.
\end{equation}
Therefore
\begin{align*}
  d_\kappa &= \lim_{N\rightarrow\infty}\exp\left(2\kappa^2 R_2 -\kappa^2\log N + \sum_{m\geq 2}\frac{R_{2m}}{(2m)!}(2\kappa)^{2m}\right)= e^{\kappa^2(\gamma+1)}\left(1 + \sum_{m\geq 2}B_{2m}(2\kappa)^{2m}\right)
\end{align*}
using \eqref{eq:comb_def_bn}.  This is precisely the multiplicative factor one finds evaluating \eqref{eq:imag_density} at $x(N;1-n^{-\alpha})$ for $\alpha>1$. 

\subsection{Proof of Theorem~\ref{thm3}}

Recall that we write $P_N'(A,\phi)=\frac{\der}{\der \theta}P_N(A,\theta)\Big|_{\theta=\phi}$. Therefore, if $\phi=\theta_1$ is an eigenangle of $A$ then
\[\RE\log P_N'(A,\theta_1) = \sum_{j=2}^N\RE\log(1-e^{i(\theta_1-\theta_j)}).\]
Hence, one can interpret $\RE\log P_N'(A,\theta_1)$ as the real part of the logarithm of a characteristic polynomial of an $(N-1)\times(N-1)$  unitary matrix. It is therefore not surprising that the central limit theorem~\eqref{eq:hko_discreteclt} holds. To begin analysing the large deviations, we again turn to the moment generating function~\cite{hugkeaoco00, hughes01}
\begin{equation}\label{eq:mgf_re_discrete}
  \mathbb{E}\left[e^{s\RE\log P_N'(A,\theta_1)}\right]=\frac{\mathcal{G}^2(2+s/2)\mathcal{G}(N+2+s)\mathcal{G}(N)}{\mathcal{G}(3+s)\mathcal{G}(N+1+s/2)N}.
\end{equation}
The cumulants are hence
\begin{align*}
  S_1(N) &= \log N + \gamma - 1 +\mathcal{O}(N^{-1})\\
  S_2(N) &=\frac{1}{2}\log N + \frac{1}{2}(\gamma+3 - 3\zeta(2)) + \mathcal{O}(N^{-1})\\
  S_j(N) &=\mathcal{O}(1),\qquad j\geq 3.
\end{align*}
Once more this is sufficient to establish the central limit theorem. Let $\varsigma_N(x)$ be the density of $(\RE\log P_N'(A,\theta_1)-S_1)/\sqrt{S_2}$.  We first show the equivalent form for $\varsigma_N$ to \eqref{eq:rho} and \eqref{eq:imag_density}.

\begin{lem}\label{lemma1}
  Let $\varsigma_N$ be the probability density function for $(\RE\log P_N'(A,\theta_1)-S_1)/\sqrt{S_2}$, where $(S_m(N))_{m\geq 1}$ is the sequence of the associated cumulants. Then
  \begin{equation}\label{eq:real_der_density}
    \varsigma_N(x) = \frac{1}{\sqrt{2\pi}}e^{-\frac{x^2}{2}}\left(1+\sum_{m\geq 3}C_m(N)\left(\frac{i}{\sqrt{S_2(N)}}\right)^m\sum_{p=0}^m\binom{m}{p}\mathcal{E}(p,m)\;(-ix)^p\right)
  \end{equation}
  where $(C_m(N))_{m\geq 3}$ is a sequence defined combinatorially in terms of the cumulants $(S_m(N))_{m\geq 3}$ via
  \begin{equation}\label{eq:comb_def_cn}
  1+\sum_{m\geq 3}C_m(N) u^m = \exp\left(\sum_{m\geq 3}\frac{S_m(N)}{m!}u^m\right).
  \end{equation}
\end{lem}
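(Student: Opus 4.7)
The plan is to follow the Keating–Snaith derivation of \eqref{eq:rho} (their Eq.~(53)) verbatim, with the moment generating function \eqref{eq:mgf_re_discrete} in place of \eqref{eq:re_mgf}. The derivation has three phases: convert the MGF to the characteristic function of the normalized variable; isolate the non-Gaussian contribution via the combinatorial recipe \eqref{eq:comb_def_cn}; and Fourier-invert term by term.

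For the first phase, set $X := \RE\log P_N'(A,\theta_1)$, $Y_N := (X-S_1(N))/\sqrt{S_2(N)}$, and write $K_X(s) = \log\mathbb{E}[e^{sX}] = \sum_{j\geq 1} S_j(N)\,s^j/j!$ for the cumulant generating function. Substituting $s\mapsto it/\sqrt{S_2(N)}$ and recentering, the $S_1$ contribution cancels, the $S_2$ contribution produces the Gaussian factor, and one obtains
\[
\phi_{Y_N}(t) := \mathbb{E}[e^{itY_N}] = e^{-t^2/2}\exp\!\Bigl(\sum_{m\geq 3}\frac{S_m(N)}{m!}\Bigl(\tfrac{it}{\sqrt{S_2(N)}}\Bigr)^{m}\Bigr).
\]

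For the second phase, apply \eqref{eq:comb_def_cn} with $u = it/\sqrt{S_2(N)}$ to rewrite
\[
\phi_{Y_N}(t) = e^{-t^2/2}\Bigl(1+\sum_{m\geq 3}C_m(N)\Bigl(\tfrac{it}{\sqrt{S_2(N)}}\Bigr)^{m}\Bigr).
\]
Fourier inversion then gives, formally,
\[
\varsigma_N(x) = \frac{1}{\sqrt{2\pi}}e^{-x^2/2} + \sum_{m\geq 3}\frac{C_m(N)}{S_2(N)^{m/2}}\cdot\frac{1}{2\pi}\int_{-\infty}^{\infty}(it)^m e^{-itx-t^2/2}\,\der t.
\]
The inner integrals are standard Gaussian-moment integrals: repeated differentiation of $\int e^{-itx}e^{-t^2/2}\,\der t = \sqrt{2\pi}\,e^{-x^2/2}$ with respect to $x$ produces a probabilist's Hermite polynomial times $e^{-x^2/2}/\sqrt{2\pi}$, whose explicit monomial expansion has $x^p$-coefficients given by $\binom{m}{p}(m-p-1)!!$ when $m-p$ is even and zero otherwise, i.e.\ $\binom{m}{p}\mathcal{E}(m,p)$ as in \eqref{eq:ks_efun}. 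Collecting the $i$-powers into the factors $(i/\sqrt{S_2(N)})^m$ and $(-ix)^p$ yields exactly \eqref{eq:real_der_density}.

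The main obstacle is not the symbolic manipulation but justifying the interchange of sum and integral in the Fourier inversion. This is essentially the same argument as in~\cite{keasna00a}, and rests on two observations: first, $S_j(N)=\mathcal{O}(1)$ for $j\geq 3$, which together with \eqref{eq:comb_def_cn} expresses each $C_m(N)$ as a finite polynomial in $(S_j(N))_{3\leq j\leq m}$ and forces $C_m(N)=\mathcal{O}(1)$ uniformly in $N$; second, the integrand $e^{-itx-t^2/2}$ has Gaussian decay in $t$, so each term-integral converges absolutely and the series obtained after inversion matches the one produced by formally expanding the exponential inside $\phi_{Y_N}$. This permits a dominated-convergence argument identical to the one underpinning~\cite[Eq.~(53)]{keasna00a}, completing the proof.
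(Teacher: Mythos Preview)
Your proof is correct and follows essentially the same approach as the paper: both obtain the characteristic function from the cumulant expansion, isolate the Gaussian factor, package the higher cumulants into the $C_m$ via \eqref{eq:comb_def_cn}, and Fourier-invert term by term using the Gaussian-Hermite integrals. The only cosmetic difference is that the paper first inverts the unscaled density $\sigma_N$ and then rescales to $\varsigma_N$, whereas you work directly with the normalized variable $Y_N$; your version also makes the dominated-convergence justification more explicit than the paper does.
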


\begin{proof}
  The proof closely follows the construction in~\cite{keasna00a}. Define the non-rescaled measure as $\varsigma_N(u) = \sqrt{S_2}\sigma_N(\sqrt{S_2}u+S_1)$. Then
  \begin{align}
    \sigma_N(x)
    &= \frac{1}{2\pi}\int_{\mathbb{R}}e^{-iyx}\mathbb{E}\left[e^{iy\RE\log P_N'(A,\theta_1)}\right]dy\nonumber\\
    &=\frac{1}{\sqrt{2\pi}}\frac{1}{\sqrt{S_2}}e^{-\frac{(x-S_1)^2}{2S_2}} + \frac{1}{2\pi}\int_{\mathbb{R}}e^{iy(S_1-x)-\frac{y^2}{2}S_2}\sum_{\ell\geq 0}\frac{1}{\ell!}\left(\sum_{m\geq 3}\frac{S_m}{m!}(iy)^m\right)^\ell \der y.\label{eq:combinatorial_def_deriv}
  \end{align}
  Therefore
  \begin{align*}
    \varsigma_N(u) &= \frac{1}{\sqrt{2\pi}}e^{-\frac{u^2}{2}} + \frac{1}{2\pi}\int_{\mathbb{R}}e^{iuw - \frac{w^2}{2}}\sum_{m\geq 3}\frac{C_m}{S_2^{m/2}}(iw)^m \der w\\
    &= \frac{1}{\sqrt{2\pi}}e^{-\frac{u^2}{2}}\left(1+\sum_{m\geq 3}C_m\left(\frac{i}{\sqrt{S_2(N)}}\right)^m\sum_{p=0}^m\binom{m}{p}\mathcal{E}(p,m)\;(-ix)^p\right)
  \end{align*}
  upon integration, which exactly matches the structure of~\eqref{eq:rho} (though with different cumulants). The $m$th term of the sequence $(C_m(N))_{m\geq 3}$ is the coefficient of $(iy)^m$ in the expansion of the sum over $\ell$ in \eqref{eq:combinatorial_def_deriv}. 
\end{proof}

Since $S_2(N)\sim Q_2(N)$ and the sum representation for $\varsigma_N$ has exactly the same structure as \eqref{eq:rho}, the analysis in the proof of Theorem~\ref{thm1} goes through verbatim. 
Indeed, from Lemma~\ref{lemma1} one sees that $\RE\log (P_N'(A,\theta_1)/\exp(S_1))$ behaves essentially like $\RE\log P_N(A,0)$.  Therefore, as in the statement of Theorem~\ref{thm3}, we take for $\kappa\geq 0$
\begin{align}
  x(N,\vare)&=\kappa\frac{\sqrt{(\log N)^{1+\vare}}}{\sqrt{S_2}}
  \sim \kappa\frac{e^{\frac{1+\vare}{2}n}}{\sqrt{S_2}}.\label{eq:xval_deriv_re}
\end{align}
We record that
\begin{align*}
  \frac{x(N;\vare)^2}{2}&=\kappa^2\frac{e^{(2-n^{-\alpha})n}}{2S_2}
  \sim \kappa^2 e^{n-n^{1-\alpha}}\Big(1-(\gamma+3-3\zeta(2))e^{-n}\Big).
\end{align*}
Therefore, evaluating Lemma~\ref{lemma1} at \eqref{eq:xval_deriv_re} we arrive at
\begin{align}
  \varsigma_N(x(N;\vare))\sim\frac{e^{-\frac{x^2}{2}}}{\sqrt{2\pi}}\left(1+\sum_{m\geq 3}C_m\left(\frac{x}{\sqrt{S_2}}\right)^m\right)
  &\sim \frac{1}{\sqrt{2\pi}}\exp\left(-\kappa^2 e^{n-n^{1-\alpha}}\right)c_N(\alpha;\kappa)\label{eq:density_re_deriv_alpha}
\end{align}
where
\begin{equation*}
c_N(\alpha;\kappa)=\exp\left(\kappa^2(\gamma+3-3\zeta(2))e^{-n^{1-\alpha}}\right)\left(1+\sum_{m\geq 3}C_m\left(2\kappa e^{-\frac{1}{2}n^{1-\alpha}}\right)^m\right).
\end{equation*}

Now analysing \eqref{eq:density_re_deriv_alpha} for the different ranges of $\alpha$, it is clear that if $\alpha<1$ then $c_N(\alpha;\kappa)\rightarrow 1$ (again as $(C_m(N))_{m\geq 3}$ only involves $(S_m(N))_{m\geq 3}$ the contribution from the sum vanishes in the limit). If $\alpha\geq 1$ then we get a contribution equal to for $\alpha>1$
\begin{equation}\label{eq:deriv_coeff_interim}
  c_N(\alpha;\kappa)\sim \exp\left(\kappa^2(\gamma+3-3\zeta(2))\right)\left(1+\sum_{m\geq 3}C_m\left(2\kappa\right)^m\right)
\end{equation}
and similarly for $\alpha=1$ with $\kappa$ replaced with $\kappa/\sqrt{e}$. It is also simple to check that setting $\alpha=1\pm (\log N\log\log\log N)^{-1}$ smooths the transition either side of $1$. To conclude, we show that \eqref{eq:deriv_coeff_interim} is equal to the coefficient of the $2\kappa$th exponential moment of $\RE\log P_N'(A,\theta_1)-S_1$.  
From~\cite{hugkeaoco00} we have
\begin{equation}\label{eq:imag_deriv_asymp}
\mathbb{E}\left[e^{s\RE\log P_N'(A,\theta_1)}\right]\sim \frac{\mathcal{G}^2(2+s/2)}{\mathcal{G}(s+3)}N^{\frac{s}{4}(s+4)}.
\end{equation}
We thus consider
\begin{align*}
  \mathbb{E}\left[e^{2\kappa (\RE\log P_N'(A,\theta_1)-S_1)}\right]N^{-\kappa^2}
  &= \exp\left(\sum_{m\geq 1}S_m\frac{(2\kappa)^m}{m!} -2\kappa S_1 - \kappa^2\log N\right)\\
  &\sim \exp\left(\kappa^2(\gamma+3 - 3\zeta(2))+\sum_{m\geq 3}S_m\frac{(2\kappa)^m}{m!}\right)
\end{align*}
which matches \eqref{eq:deriv_coeff_interim} using \eqref{eq:comb_def_cn}.

\subsection{Proof of Theorems~\ref{thm4} and~\ref{thm5}}
As in the proof of Theorem~\ref{thm3}, we write
$P_N'(A,\phi)=\frac{\der}{\der \theta}P_N(A,\theta)\Big|_{\theta=\phi}$. 
Therefore, if $\phi=\theta_1$ is an eigenangle of $A$ then
\begin{align}
  \IM\log P_N'(A,\theta_1)
  &= \IM\log\left(-i\prod_{j=2}^N\Big(1-e^{i(\theta_1-\theta_j)}\Big)\right)\nonumber\\
  &=-\frac{\pi}{2}-\sum_{j=2}^N\sum_{\ell\geq 1}\frac{\sin((\theta_1-\theta_j)\ell)}{\ell}.\label{eq:im_part_deriv}
\end{align}

Then, using the explicit form for the Haar measure on $\UU(N)$ we have the following reduction for the exponential moments of $W_N$ (this is, in part, a specialisation of Lemma 1.9 of~\cite{hughes01}.)
\begin{lem}\label{lemma2}
  For $s\in\mathbb{C}$, and $e^{i\theta_1}$ an eigenvalue of $A$, 
  \begin{equation}\label{eq:mgf_imag_deriv}
    \mathbb{E}\left[e^{i s \IM\log P_N'(A,\theta_1)}\right]=\frac{(-i)^s}{N}\mathbb{E}_{(N-1)}\left[e^{2\RE\log \tilde{P}_N(A,0)+i s\IM\log\tilde{P}_N(A,0)}\right]
  \end{equation}
  where the decoration $(N-1)$ on the expectation means average over $\UU(N-1)$ and
  \[\tilde{P}_N(A,\phi)=\prod_{j=2}^N\big(1-e^{i(\theta_j-\phi)}\big).\]
  Further, as $N\rightarrow\infty$, the expectation splits,
  \begin{align}
    \mathbb{E}\left[e^{i s \IM\log P_N'(A,\theta_1)}\right]&\sim\frac{(-i)^s}{N}\mathbb{E}_{(N-1)}\left[e^{2\RE\log \tilde{P}_N(A,0)}\right]\mathbb{E}_{(N-1)}\left[e^{i s\IM\log\tilde{P}_N(A,0)}\right]\nonumber\\
    &\sim (-i)^s\mathcal{G}\Big(1+\frac{s}{2}\Big)\mathcal{G}\Big(1-\frac{s}{2}\Big)N^{-\frac{s^2}{2}}.\label{eq:joint_mgf_asymp}
  \end{align}
\end{lem}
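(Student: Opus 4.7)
The plan is to reduce the $U(N)$ expectation, conditional on one eigenangle, to a reweighted $U(N-1)$ expectation via Weyl integration (as in Lemma 1.9 of~\cite{hughes01}), then to extract the large-$N$ behaviour via the Keating--Snaith joint-moment formula and the standard Barnes $G$-function asymptotic.

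\textbf{Step 1: the exact identity.} First I would differentiate $P_N(A,\theta)=\prod_j(1-e^{i(\theta-\theta_j)})$ at an eigenangle $\theta=\theta_1$. Only the term whose zero has been removed survives, giving $P_N'(A,\theta_1)=-i\prod_{j\geq 2}(1-e^{i(\theta_1-\theta_j)})=-i\,\overline{\tilde P_N(A,\theta_1)}$, and hence $\IM\log P_N'(A,\theta_1)=-\pi/2-\IM\log\tilde P_N(A,\theta_1)$. Conjugation symmetry of the Haar law makes the sign in front of $\IM\log\tilde P$ irrelevant to its distribution, so
\[
\mathbb{E}\bigl[e^{is\IM\log P_N'(A,\theta_1)}\bigr]=(-i)^s\,\mathbb{E}\bigl[e^{is\IM\log\tilde P_N(A,\theta_1)}\bigr].
\]
Next, by rotational invariance I would condition on $\theta_1=0$. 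Weyl's integration formula gives the joint eigenangle density on $U(N)$ as $\propto\prod_{j<k}|e^{i\theta_j}-e^{i\theta_k}|^2$, so the conditional density of $(\theta_2,\dots,\theta_N)$ given $\theta_1=0$ is proportional to $\prod_{2\leq j<k}|e^{i\theta_j}-e^{i\theta_k}|^2\prod_{j\geq 2}|1-e^{i\theta_j}|^2$, which is exactly the Haar law on $U(N-1)$ reweighted by $|\tilde P_N(A,0)|^2$. Its normalising constant is $\mathbb{E}_{(N-1)}[|\tilde P|^2]=N$, the classical value $\mathbb{E}_{U(M)}[|P_M|^2]=M+1$ at $M=N-1$. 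Writing $|\tilde P|^2 e^{is\IM\log\tilde P}=e^{2\RE\log\tilde P+is\IM\log\tilde P}$ then yields the first displayed identity.

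\textbf{Step 2: the large-$N$ asymptotic.} I would compute the joint moment exactly using the Keating--Snaith formula $\mathbb{E}_{U(M)}[|P|^t e^{is\arg P}]=\prod_{j=1}^M\Gamma(j)\Gamma(j+t)/[\Gamma(j+(t+s)/2)\Gamma(j+(t-s)/2)]$ at $t=2$, $M=N-1$, then apply the standard Barnes $G$-function asymptotic
\[
\prod_{j=1}^M\frac{\Gamma(j)\Gamma(j+a+b)}{\Gamma(j+a)\Gamma(j+b)}\sim\frac{\mathcal{G}(1+a)\mathcal{G}(1+b)}{\mathcal{G}(1+a+b)}\,M^{ab}
\]
with $a=1+s/2$, $b=1-s/2$ (so $a+b=2$, $ab=1-s^2/4$), and finally use the functional equation $\mathcal{G}(z+1)=\Gamma(z)\mathcal{G}(z)$ to recast $\mathcal{G}(2\pm s/2)$ in terms of $\mathcal{G}(1\pm s/2)$. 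The ``splitting'' presentation is then exhibited by recording that $\mathbb{E}_{(N-1)}[|\tilde P|^2]=N$ is exact and, by the same Barnes-$G$ asymptotic with $(a,b)=(s/2,-s/2)$, that $\mathbb{E}_{(N-1)}[e^{is\arg\tilde P}]\sim\mathcal{G}(1+s/2)\mathcal{G}(1-s/2)N^{-s^2/4}$; combined with the $(-i)^s/N$ prefactor both presentations deliver the stated leading form.

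\textbf{Main obstacle.} The work is entirely in the Barnes $G$-function bookkeeping: several shifted $\mathcal{G}(N+\text{const})$ factors must combine so that their $N$-dependent pieces produce precisely the claimed Gaussian algebraic decay while the Barnes-$G$ constants assemble into $\mathcal{G}(1+s/2)\mathcal{G}(1-s/2)$. Care is also required in reconciling the ``joint'' and ``split'' presentations, which match at the level of the leading Gaussian scaling and are bridged through the functional equation of $\mathcal{G}$; the Gamma reflection $\Gamma(1+s/2)\Gamma(1-s/2)=\pi s/(2\sin\pi s/2)$ records the difference between the constant prefactors of the two routes.
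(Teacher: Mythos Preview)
Your Step 1 is essentially the paper's argument: both reduce the $U(N)$ average to a $|\tilde P_N|^2$-reweighted $U(N-1)$ average via Weyl integration and rotation invariance. The paper writes the Haar integral out explicitly and invokes periodicity to set $\theta_1=0$; you phrase the same manipulation as conditioning on $\theta_1=0$ and recognising the reweighted $U(N-1)$ law. The normalising constant $\mathbb{E}_{(N-1)}[|\tilde P_N|^2]=N$ appears in both.

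Your Step 2 takes a different route. The paper's argument for the asymptotic is brief: it cites the Keating--Snaith result that $\RE\log P_N$ and $\IM\log P_N$ become asymptotically independent, declares that the joint expectation therefore splits, and then multiplies the two known marginals $\mathbb{E}_{(N-1)}[|\tilde P_N|^2]=N$ and the Barnes-$G$ asymptotic for $\mathbb{E}_{(N-1)}[e^{is\arg\tilde P_N}]$. You instead evaluate the exact joint product $\prod_j \Gamma(j)\Gamma(j+2)/[\Gamma(j+1+s/2)\Gamma(j+1-s/2)]$ directly via the Barnes-$G$ asymptotic with $(a,b)=(1+s/2,\,1-s/2)$. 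Your route is more self-contained and does not rely on the independence heuristic. It also makes visible the point you flag under ``Main obstacle'': the direct route produces the constant $\mathcal{G}(2+s/2)\mathcal{G}(2-s/2)$, whereas the split route produces $\mathcal{G}(1+s/2)\mathcal{G}(1-s/2)$, and the two differ by the factor $\Gamma(1+s/2)\Gamma(1-s/2)$. The paper's proof does not address this discrepancy; asymptotic independence of the \emph{normalised} variables is not the same as factorisation of the unnormalised joint MGF up to $(1+o(1))$, so the paper's one-line justification of the split is looser than your direct computation.
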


\begin{proof}
 By the explicit form of the Haar measure on $\UU(N)$ and \eqref{eq:im_part_deriv} we have
 \begin{align*}
    \mathbb{E}&\left[e^{i s \IM\log P_N'(A,\theta_1)}\right]\\
    &=\frac{1}{N!}\frac{e^{-i\pi s/2}}{(2\pi)^N}\int_{[0,2\pi]^N}\prod_{m=2}^N\left\{|e^{i\theta_1}-e^{i\theta_m}|^2 e^{-is\sum_{\ell\geq 1}\frac{\sin((\theta_1-\theta_m)\ell)}{\ell}}\right\}\prod_{2\leq j<k\leq N}|e^{i\theta_j}-e^{i\theta_k}|^2\der\theta_1\cdots \der\theta_N\\
    &=\frac{(-i)^s}{2\pi N}\int_0^{2\pi}\mathbb{E}_{(N-1)}\left[\prod_{m=2}^N |1-e^{i\theta_m}|^2 e^{-is\sum_{\ell\geq 1}\frac{\sin(\theta_m\ell)}{\ell}}\right]\der\theta_1
 \end{align*}
 using periodicity.  This establishes \eqref{eq:mgf_imag_deriv}. 
 
 For $A\in\UU(N)$ drawn with Haar measure, as $N\rightarrow\infty$, the random variables $\RE\log P_N(A,\theta)$ and $\IM\log P_N(A,\theta)$ become independent~\cite{keasna00a}, therefore the moment generating function~\eqref{eq:mgf_imag_deriv} will split (see also~\eqref{eq:mgf_imag_deriv_2}).  Further, for such $A$,
 \[\mathbb{E}\left[|P_N(A,\theta)|^2\right] = N+1\]
 so combined with~\eqref{eq:imag_coeff} this shows~\eqref{eq:joint_mgf_asymp}.
\end{proof}

For $B\in\UU(N)$ drawn with Haar measure, the joint moments of $\RE\log P_N(B,\theta)$ and $\IM\log P_N(B,\theta)$ have been computed~\cite{keasna00a, bakfor97, botsil99}, so \eqref{eq:mgf_imag_deriv} is
\begin{align}
  \mathbb{E}\left[e^{i s \IM\log P_N'(A,\theta_1)}\right]
  &=\frac{(-i)^s}{N}\prod_{\ell=1}^{N-1}\frac{\Gamma(\ell)\Gamma(\ell+2)}{\Gamma(\ell+1+s/2)\Gamma(\ell+1-s/2)}.\label{eq:mgf_imag_deriv_2}
\end{align}

Therefore $(T_m(N))_{m\geq 1}$, the cumulants of $W_N(A,\theta_1)$, defined via
\[\log\mathbb{E}\left[e^{t \IM\log P_N'(A,\theta_1)}\right]=\sum_{m\geq 1}\frac{T_m(N)}{m!}t^m,\]
can be found by differentiating the logarithm of \eqref{eq:mgf_imag_deriv_2}:
\begin{align}
  T_m(N)&=\frac{\der^m}{\der t^m}\log\mathbb{E}\left[e^{t \IM\log P_N'(A,\theta_1)}\right]\Big|_{t=0}\nonumber\\
  &=-
  \begin{cases}
    \frac{\pi}{2},&m=1\\
    0,&m>1\text{ odd}\\
    \frac{i^m}{2^{m-1}}\sum_{\ell=1}^{N-1}\Psi^{(m-1)}(\ell+1),&m\text{ even}
  \end{cases}\label{eq:imag_deriv_cumulants}
\end{align}
where $\Psi^{(n)}(z)=\frac{\der^{n+1}}{\der z^{n+1}}\log\Gamma(z)$ is the Polygamma function. Comparing to \eqref{eq:imag_cumulants} we see for $m\geq 1$ fixed
\[T_{2m}(N) = R_{2m}(N)+\frac{(-1)^{m}}{2^{2m-1}}\Psi^{(2m-1)}(1)=R_{2m}(N)+\mathcal{O}(1)\]
since $\Psi^{(n)}(1)=(-1)^{n+1}\Gamma(n+1)\zeta(n+1)$.  In particular 
\[T_2(N)=\frac{1}{2}(\log N + \gamma + 1 - \zeta(2))+\mathcal{O}(N^{-2}).\]
This immediately implies Theorem~\ref{thm4} (with the same speed of convergence as for the non-derivative case). Hence writing $\tau_N(x)$ for the density function of $(W_N(A,\theta_1)-T_1(N))/\sqrt{T_2(N)}$, we have
\begin{equation}\label{eq:imag_density_deriv}
  \tau_N(x) =\frac{1}{\sqrt{2\pi}}e^{-x^2/2}\left(1+\sum_{m\geq2}D_{2m}(N)\left(\frac{i}{\sqrt{T_2(N)}}\right)^{2m}\sum_{p=0}^{2m}\binom{2m}{p}\mathcal{E}(2m,p)\;(-ix)^p
  \right).
\end{equation}
In~\eqref{eq:imag_density_deriv}, as before $D_{2m}(N)$  is defined via 
\begin{equation}\label{eq:comb_def_dn}
1+\sum_{m\geq 2}D_{2m}(N)u^{2m}=\exp\left(\sum_{m\geq 2}\frac{T_{2m}(N)}{(2m)!}u^{2m}\right).
\end{equation}

The analysis of \eqref{eq:imag_density_deriv} for $x(N;\vare)=\kappa \sqrt{(\log N)^{1+\vare}/T_2}$ follows as in the previous three cases since $T_2(N)\sim R_2(N)$.  It remains to show that for $\alpha> 1$ the coefficient $g_\kappa$ in
\[\tau_N(x(N;1-n^{-\alpha}))\sim g_\kappa\cdot \frac{1}{\sqrt{2\pi}}\exp\Big(-\kappa^2\exp(n-n^{1-\alpha})\Big)\]
is the coefficient of the $2\kappa$th exponential moment of $\Arg P_N'(A,\theta_1)-T_1$.
(The $\alpha=1$ case follows similarly.) 

From \eqref{eq:imag_density_deriv}, for $\alpha>1$ we have 
\begin{equation}\label{eq:imag_deriv_coeff_expression}
  g_\kappa = e^{\kappa^2(\gamma+1-\zeta(2))}\left(1+\sum_{m\geq 2}D_{2m}(2\kappa)^{2m}\right),
\end{equation}    
and as usual, since $D_{2m}(N)$ is built solely of cumulants of $\IM\log P_N'(A,\theta_1)$ of index greater than $4$, they are constant in the limit. Similarly, by Lemma~\ref{lemma2} we have 
\begin{align*}
    |\mathcal{G}(1+i\kappa)|^2
    &=\lim_{N\rightarrow\infty}\mathbb{E}[e^{2\kappa(\IM\log P_N'(A,\theta_1)-T_1)}]N^{-\kappa^2}\\
    &=\lim_{N\rightarrow\infty}\exp\Big(-2\kappa T_1 - \kappa^2\log N + 2\kappa T_1 + 2\kappa^2T_2 + \sum_{m\geq 2} \frac{T_{2m}(N)}{(2m)!}(2\kappa)^{2m}\Big)\\
    &=e^{\kappa^2(\gamma+1-\zeta(2))}\cdot \exp\Big(\sum_{m\geq 2} \frac{T_{2m}(N)}{(2m)!}(2\kappa)^{2m}\Big)
\end{align*}
which aligns with the expression \eqref{eq:imag_deriv_coeff_expression} after recalling \eqref{eq:comb_def_dn}.  This concludes the proof of Theorem~\ref{thm5}. 

\bibliographystyle{alpha}
\bibliography{~/Documents/myPapers/ref.bib}

\end{document}